\documentclass[a4paper,twocolumn,fontsize=10pt,DIV=16,abstract=true,headings=normal]{scrartcl}
\pdfoutput=1

% reduce spacing before and after headings
% https://tex.stackexchange.com/questions/223508/adjusting-spacing-around-section-subsection-titles-with-koma-script/223524
\RedeclareSectionCommand[
  afterindent=false,
  beforeskip=.5\baselineskip,
  afterskip=.25\baselineskip]{section}
\RedeclareSectionCommand[
  afterindent=false,
  beforeskip=.25\baselineskip,
  afterskip=.125\baselineskip]{subsection}
\RedeclareSectionCommand[
  afterindent=false,
  beforeskip=.5\baselineskip,
  afterskip=.125\baselineskip]{subsubsection}

\usepackage[utf8]{inputenc}
\usepackage[T1]{fontenc}
\usepackage[protrusion=true,expansion=true]{microtype}

\usepackage[backend=biber,alldates=edtf,seconds=true]{biblatex}
\addbibresource{ms.bib}

\usepackage{balance}

\usepackage[acronym]{glossaries}

\newcommand{\nc}{\newcommand}
\nc{\PP}{\mathbb{P}}
\nc{\BE}{\mathbb{E}}
\nc{\BV}{\mathbb{V}}

% for the graphics
\usepackage{tikz}
\usetikzlibrary{arrows.meta,shapes}

% nice tables (taken from thesis)
\RequirePackage{booktabs}
\RequirePackage{longtable}
\RequirePackage{array}
\RequirePackage{tabularx}
\RequirePackage{multicol}
\RequirePackage{multirow}

\RequirePackage{siunitx}
% siunitx does not have binary units by default >.<
\sisetup{binary-units}
\usepackage{scrlayer-scrpage}

\usepackage{amsmath}
\usepackage{amssymb}
\usepackage{amsfonts}
\usepackage{amsthm}
% reduce math mode font size
\DeclareMathSizes{10}{9}{7}{5}

% arxiv already loads hyperref
% \RequirePackage[pdfusetitle,pdfauthor={Jacob, Beer, Henze, Hartenstein},hidelinks]{hyperref}

\usepackage{listings}
\usepackage[capitalise]{cleveref}
\crefname{subsection}{Subsection}{Subsections}

% for custom abbreviations to fix spacing issues
\usepackage{xspace}

% nicer spacing and more features for enumerations
\usepackage{enumitem}
\setitemize{itemsep=-1mm,topsep=0mm,leftmargin=*}
\usepackage{caption}
\captionsetup{format=plain}

\usepackage{color}

\lstdefinelanguage{crdt-code}{
  keywords={state, pre, let, update, query, effector, generator, initial, return, if, else, for},
  keywordstyle=\bfseries,
  sensitive=false,
  comment=[l]{//},
  morecomment=[s]{/*}{*/},
  commentstyle=\color{gray}\ttfamily,
  stringstyle=\color{red}\ttfamily,
  morestring=[b]',
  morestring=[b]"
}

\lstset{
   language=crdt-code,
   extendedchars=true,
   basicstyle=\footnotesize\ttfamily,
   showstringspaces=false,
   showspaces=false,
   tabsize=2,
   breaklines=true,
   showtabs=false
}

\newtheorem{theorem}{Theorem}
\newtheorem{lemma}{Lemma}

\newacronym{DLT}{DLT}{Distributed Ledger Technology}
\newacronym{MEG}{MEG}{Matrix Event Graph}
\newacronym{CRDT}{CRDT}{Conflict-Free Replicated Data Type}
\newacronym{DAG}{DAG}{Directed, Acyclic Graph}
\newacronym{SEC}{SEC}{Strong Eventual Consistency}

\begin{document}
% reduce spacing around aligns, equations and other math stuff
\setlength{\abovedisplayskip}{2mm}
\setlength{\belowdisplayskip}{2mm}
\setlength{\abovedisplayshortskip}{2mm}
\setlength{\belowdisplayshortskip}{2mm}

\title{Analysis of the Matrix Event Graph\texorpdfstring{\\ }{} Replicated Data Type}
% date is added by arxiv
% \date{2020-11-11}

\author{
Florian Jacob\\
\footnotesize Karlsruhe Institute\\[-3mm]
\footnotesize of Technology\\[-3mm]
\footnotesize Institute of Telematics\\[-3mm]
\footnotesize florian.jacob@kit.edu
\and
Carolin Beer\\
\footnotesize Karlsruhe Institute\\[-3mm]
\footnotesize of Technology\\[-3mm]
\footnotesize Institute of Telematics\\[-3mm]
\footnotesize carolin.beer@student.kit.edu
\and
Norbert Henze\\
\footnotesize Karlsruhe Institute\\[-3mm]
\footnotesize of Technology\\[-3mm]
\footnotesize Institute of Stochastics\\[-3mm]
\footnotesize henze@kit.edu
\and
Hannes Hartenstein\\
\footnotesize Karlsruhe Institute\\[-3mm]
\footnotesize of Technology\\[-3mm]
\footnotesize Institute of Telematics\\[-3mm]
\footnotesize hannes.hartenstein@kit.edu
}

\maketitle

\begin{abstract}
Matrix is a new kind of decentralized, topic-based publish-subscribe middleware for communication and data storage that is getting popular particularly as a basis for secure instant messaging.
%%% It is the basis for a secure instant messaging application employed by the French government, the %%% Mozilla foundation, and others.
In comparison to traditional decentralized communication systems,
Matrix replaces pure message passing with a replicated data structure.
This data structure, which we extract and call the Matrix Event Graph (MEG),
depicts the causal history of messages.
We show that this MEG represents an interesting and important replicated data type for general decentralized applications that are based on causal histories of publish-subscribe events: we show that a MEG possesses strong properties with respect to consistency, byzantine attackers, and scalability.
First, we show that the MEG provides Strong Eventual Consistency (SEC), and that it is available under partition,
by proving that the MEG is a Conflict-Free Replicated Data Type for causal histories.
While strong consistency is impossible here
%when available under partition
as shown by the famous CAP theorem, SEC is among the best known achievable trade-offs.
Second, we discuss the implications of byzantine attackers on the data type's properties.
We note that the MEG, as it does not strive for consensus,
can cope with $n > f$ environments with $n$ total participants of which $f$ show byzantine faults.
Furthermore, we analyze scalability: Using Markov chains
we study the width of the MEG, defined as the number of forward extremities, over time and observe an almost optimal evolution.
We conjecture that this property is inherent to the underlying spatially inhomogeneous random walk.
\end{abstract}

\section{Introduction}
Matrix\footnote{\url{https://matrix.org/}, \url{https:/matrix.org/spec/}} is a specification
of protocols and their behavior for a middleware that provides communication and data services for decentralized applications.
While the size of its public federation is still comparatively small,
its utilization rises quickly,
and several organizations are deploying large, private federations.
Currently, Matrix is mainly used as the basis of a decentralized instant messaging protocol
employed by the French government, the Mozilla foundation, the Federal Defense Forces of Germany, and others.

Matrix implements topic-based publish-subscribe services based on a federated architecture.
Similar to e-mail or XMPP, clients attach themselves to a Matrix server,
their so-called homeserver, which represents them in the Matrix network.
Servers with clients subscribed to a specific topic (called room in Matrix parlance) form a federation to exchange published events independent of other topics.
Events can be either communication events or state update events on the stored data.
In the instant messaging use case, topics are employed for group or one-to-one communication rooms,
communication events are used for instant messages, while the stored data is used for persistent information like room membership or room description.

In contrast to e-mail or XMPP, Matrix replaces pure message passing
with a replicated, per-topic data structure that stores the causal history of events.
%% This replicated data type is called Matrix Event Graph and denoted by \gls{MEG} in the following.
As Matrix servers can thereby synchronize their room's full causal histories,
the Matrix approach promises increased decentralized system resilience:
After a network partition, a server has significantly stronger means to recover the complete state of the room, i.e., to avoid loss of events.
While this increased level of system resilience has been observed by practitioners, the underlying replicated data type has not yet been analyzed thoroughly.

\glsunset{MEG}
In this paper, we first extract and abstract the \acrlong{MEG} replicated data type from the Matrix
specification and denote it by \gls{MEG}.
A \gls{MEG} is a \gls{DAG} made up of vertices
which represent communication and data storage update events,
and directed edges which stand for potential causal relations between events.

Because the graph represents the potential causal order of events,
a correct graph is inherently cycle-free.
Appending new events is the only write operation supported by the Matrix Event Graph,
which makes it append-only --- and a candidate for Distributed Ledger Technologies. Thus, the \gls{MEG} can be considered as a fundamental concept for various applications that are based on causal histories, ranging from decentralized crowdsensing databases in Internet of Things scenarios over decentralized collaboration applications to decentralized push notification systems.
Since, for Distributed Ledger Technologies, it has been conjectured that consistency,
decentralization, and scalability cannot be achieved simultaneously~\cite{Zhang2018, Raikwar2020}, our analysis focuses on these aspects.
%% We show that, unusual for distributed ledgers,
%% the Matrix Event Graph achieves Decentralization and Scalability, but cannot
%% make final, coordinated decisions, i.e., have consensus

As main contribution we therefore provide an analysis of the degree to which the \gls{MEG} fulfills consistency, deployability in decentralized scenarios, and scalability:

{\bfseries Consistency:} In accordance with the CAP theorem~\cite{gilbert-lynch-cap-proof}, and since
Matrix provides availability and partition tolerance,
the \gls{MEG} necessarily has to sacrifice strong consistency.
We show that Matrix provides Strong Eventual Consistency
by proving that the \gls{MEG} is a \gls{CRDT}~\cite{shapiro-crdt} for causal histories.

{\bfseries Decentralization:}
We discuss the implications of byzantine attackers on the specific type of \gls{CRDT} that the \gls{MEG} represents.
The avoidance of consensus is the primary reason that
allows the \gls{MEG} \gls{CRDT} to facilitate $n > f$ environments with $n$ total participants of which $f$ exhibit byzantine faults.

{\bfseries Scalability:}
The inherent probabilism of uncoordinated, concurrent updates on a \gls{MEG}
is the main challeng for the analysis of the \gls{MEG} with respect to scalability.
%This is especially true with respect to the convergence as well as the speed of convergence of the %\gls{MEG}
%in presence of concurrent modifications.
We are interested in the width of the \gls{MEG} in terms of the number of forward extremities, i.e. `vertices without children', over time.
We study the width of the \gls{MEG} using a formalization by means of Markov chains.
We observe that the \gls{MEG} does not degenerate,
and conjecture that this non-degeneracy is inherent to the underlying spatially inhomogeneous random walk.

This paper is structured as follows:
We start with a more detailed description of how the \gls{MEG} works and the problem statement in \cref{sec:mx_event_graph}.
\cref{sec:background} presents related work and background on replicated data types.
Assumptions and architecture are given in \cref{sec:system-model}.
The inner working of the \gls{MEG} is formalized in \cref{sec:mx-crdt},
which is then used to prove that it is a Conflict-Free Replicated Data Type.
In \cref{sec:weakening-assumptions},
we perform a reality check of the utilized assumptions of \cref{sec:mx-crdt} and discuss how the \gls{MEG} can be made byzantine fault tolerant.
\cref{sec:convergence} formalizes the stochastic behavior of the width of the \gls{MEG}
and provides evidence that the width always evolves to a near-optimal value, and does so fast.
We conclude the paper in \cref{sec:conclusion}.

\begin{figure}[htbp]
  \resizebox{0.85\linewidth}{!}{
    \includegraphics{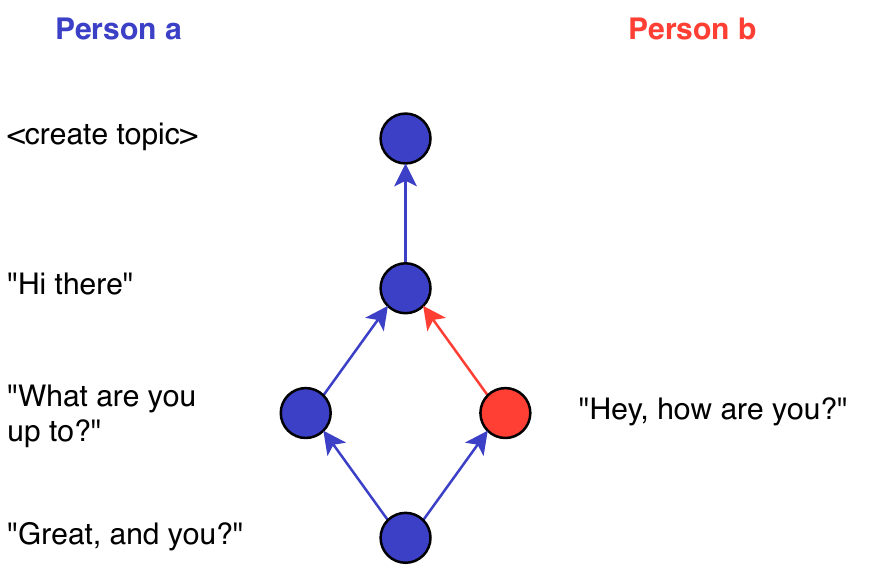}
  }
  \caption{Basic example of a \gls{MEG}}
  \label{fig:event_graph}
\end{figure}

\section{MEG: Overview and Problem Statement}
\label{sec:mx_event_graph}

In the following and for illustration purposes, we often make use of the instant messaging use case of Matrix,
but we want to emphasize that the \acrlong{MEG} is a general replicated data type for append-only causal histories of publish-subscribe events.
We also typically focus our studies on a single \gls{MEG} instance, and therefore a single broadcast domain associated with that \gls{MEG}.
However, several independent \glspl{MEG} can coexist.
%, e.g. one per publish-subscribe topic.
A sample \gls{MEG} is exhibited in \cref{fig:event_graph}.

{\bfseries General \gls{MEG} setup.} As mentioned before, a \gls{MEG} is a \acrfull{DAG}.
One \gls{MEG} represents the message history and attributes of a group or 1:1 chat,
and it is replicated independently by all participating servers.
Upon creation, the \gls{DAG} consists of only a single vertex, the \emph{root vertex}.
Each vertex in the \gls{DAG} corresponds to an application-defined publish-subscribe event,
e.g., to a text message or temperature reading.
Edges represent potential causal relationships between events:
When a new vertex is added,
it is appended to the existing \gls{DAG} through one or more outgoing edges.
These edges point towards vertices that had no incoming edges before,
i.e., the newest events in causal history,
which we from now on call the \emph{forward extremities} of the DAG.
The selection of forward extremities is done according to the current knowledge of the adding replica.
This potential causal relationship is known as the \emph{happened before} relationship\footnote{Note that \citeauthor{Lamport1978} defines $a$ \emph{happened before} $b$ as $a \rightarrow b$.
In this paper, we actually use the converse relation $b \rightarrow a$, as common for Distributed Ledger Technologies,
so that new references can be stored as part of the new vertex,
and old vertices can be kept immutable.
It follows that for $b \rightarrow a$, we say $a$ is the parent of $b$.
}, as defined by Lamport~\cite{Lamport1978}:
For $a \leftarrow b$, we say $a$ happened before $b$.
Edges thereby form a partial order that is consistent with the causal order in which events took place.

% paragraph has two roles:
% A) reduce comprehenison of DAG to single root, weakly connected, which both is not a requirement for general DAGs
% B) provide foundation for DAG property preservation
In addition to being directed, acyclic, and representing the causal order of events, the \gls{MEG}  is also weakly connected since all newly added vertices have at least one outgoing edge.
The  root vertex, as the only vertex without outgoing edges,
is therefore the unique minimal element of the partial order represented by the \gls{DAG}.
DAGs with this specific structure are called \emph{rooted}~\cite{Platt2000}.
% Each vertex, except for the root, has a maximum out-degree of one~\cite{SWB-131337882}.}, for $out=\infty$, the resulting DAG yields the causal order in which the vertices where added.
% Rooted directed graphs (also known as flow graphs) are digraphs in which a vertex has been distinguished as the root. Flow graphs are digraphs associated with a set of linear algebraic or differential equations.

{\bfseries Adding a new vertex to the source replica.}
The replica that creates an event on behalf of a client and appends it as a vertex is called \emph{source replica}.
When it adds a vertex,
the corresponding event could be causally related to previous events.
Thus, all forward extremities should be included as edges.
Replicas can experience a high number of forward extremities caused by latencies or partitions,
and malicious replicas could forge events with a high number of parents.
However, certain algorithms executed on the \gls{MEG} do not scale well with the number of parent events,
i.e., they can become very resource intensive, especially when old parts of the \gls{MEG} are referenced as parents~\cite{synapse-issue-forward-extremities-accumulate}.
% “certain algorithms” especially means state resolution, cf. github issue.
% while one could expand the argumentation, state resolution is out of scope for the paper.
In practice, the maximum number of parent events
therefore has to be restricted to a finite value $d$.
If there are more than $d$ forward extremities,
a replica selects a subset of size $d$ for the new event.
For the potential causal order relation in the \gls{MEG} still to be consistent with the actual causal order,
clients have to inform the replica about actual causal dependencies so that those are included as parents.
% This means that a single non-concurrent event is not sufficient anymore to get the number of forward extremities down to one.
% The convergence properties of the graph under restricted random parents are the topic of \cref{sec:convergence}.

{\bfseries Updating all replicas.} Beyond appending the new vertex to the local \gls{DAG}, the source replica also needs to synchronize with the other replicas.
The replica sends a \gls{DAG} update that consists of the new vertex and edges
to all replicas using a broadcast protocol.
% As we focus on the replicated data type, the details of the update distribution protocol are out of
% scope for this paper, see e.g.~\cite{glimpseofthematrix-techreport} for reference.
On reception of an update,
% replicas need to execute checks on whether the update is correct, authenticated, authorized and the like.
% Namely, the replica has to include a Reference Monitor that perform access control on its copy of the \gls{DAG}.
% In this paper, we assert a secure, decentralized Reference Monitor implementation.
% See~\cite{matrix-decomposition} for an analysis of access control and Reference Monitor implementation for the \gls{MEG}.
% After the update has passed the receiving Reference Monitor,
replicas append the new vertex to their \gls{DAG} via the new edges selected by the source replica as soon as all required parent vertices exist in the local replica.
% synchronization was successful and the replicas are consistent again.
In case the parent vertices are not (yet) available, the update is buffered until they are.
% and added later, when the parents eventually are available.

{\bfseries Dealing with concurrent updates.} When clients at two different replicas concurrently invoke updates, each replica thinks of their vertex as the single next step in causal history represented by their \glspl{DAG},
i.e., both deviate from the last consistent \gls{DAG} state.
In case of continuous synchronization failure,
e.g. due to a network partition,
additional client updates will enlarge the inconsistency between the replicas' DAGs
and lead to two causally independent chains of events, built from the last synchronized event.
Both replicas will continue to try to synchronize their state with other replicas.
When the partition heals, all replicas will eventually receive all updates.
As depicted in \cref{fig:event_graph}, instead of trying to find a linear order of updates and to solve conflicts with rollbacks,
the concurrent DAG states are merged by attaching both causally independent chains of events to the last synchronized event, i.e., by forking the DAG.
This acceptance of concurrency in the data type itself by only providing a partial order on events is the core idea of the Matrix Event Graph.
It is also the basis for our proof of conflict-freedom in \cref{sec:mx-crdt}.
A fork in the DAG introduced by concurrency will lead to two causally independent forward extremities.
Following the attachment rules for new vertices,
a replica that has received and appended both causally independent chains to its DAG selects both as parents for a new vertex.
In terms of graphs, this means that the new vertex will join both chains again, which marks that the period of concurrency and causal independence is over,
and reduces the number of forward extremities by one.

{\bfseries Problem statement.}
The way in which concurrency is handled in a \gls{MEG} as well as the use of various parameters as outlined above give rise to the key research questions addressed in this paper:
Which consistency guarantees can application developers expect from a \gls{MEG} --- and under which assumptions do they hold? And: Can the width of the \gls{MEG} degenerate?
The preceding explanations describe how the \gls{MEG} is available under partition,
and how it tries to achieve Eventual Consistency,
as conjectured by the Matrix developers~\cite{matrix-spec-architecture}.
In this paper, we provide a proof of Strong Eventual Consistency in \cref{sec:mx-crdt}.
In \cref{sec:weakening-assumptions}, we relax the employed assumptions, particularly on the communication primitive.
In addition, the overview above
showed that if the number of vertex parents is restricted to $d$ and selected randomly,
the evolution of the number of forward extremities $u$,
i.e., the width of the DAG, is non-trivial in concurrent environments.
In \cref{sec:convergence},
we explore whether
for arbitrary start values of $u$,
if $k$ replicas continuously select $d$ parents independently and then synchronize the new vertices,
the width of the DAG converges in a sufficiently small number of iterations.
In addition, we explore how the choice of the number of parent vertices $d$ affects the speed of convergence.

{\bfseries Not in scope of this paper:}
While we make assumptions on and deal with the underlying broadcast communication primitive,
we consider the topic of broadcast communication per se beyond the scope of this paper.
Moreover, Matrix employs an access control system for \glspl{MEG}, which we will not consider further, but which has been examined in \cite{matrix-decomposition}.

\section{Related Work \& Background}
\label{sec:background}
\glsreset{CRDT}

\citeauthor*{glimpseofthematrix} investigated quantitative aspects of the public Matrix federation,
and found scalability problems with the broadcast communication
currently employed by Matrix~\cite{glimpseofthematrix}.
However, they did not investigate the scalability and other properties of the replicated data structure itself.
The access control system of Matrix, which builds on top of the \gls{MEG}, was very recently studied in \cite{matrix-decomposition}.
Privacy and usability aspects of Matrix,
along with a \acrshort{CRDT}-based vision on how to improve this situation in federated networks in general,
are the topic of~\cite{auvolat2019}.
% The privacy and usability of federated networks, including Matrix, is examined in~\cite{auvolat2019}.
% They present ideas on how to improve these aspects using \glspl{CRDT}.

In the field of replicated data types,
\citeauthor*{shapiro-crdt} introduced the category of \glspl{CRDT},
together with a new consistency model provided by the category, namely Strong Eventual Consistency~\cite{shapiro-crdt}.
Following the initial definition,
new papers mostly focused on implementations of the data type like the JSON-CRDT by \citeauthor*{kleppmann-json-crdt} ~\cite{kleppmann-json-crdt},
or extended the base concept of \glspl{CRDT}~\cite{DePorre2019}.
% --> SECRO: \glspl{CRDT} with pre- und postcondition
%\citeauthor*{DePorre2019} introduced SECRO,
% a kind of \glspl{CRDT} with pre- and postconditions~\cite{DePorre2019}.

% --> Shapiro updated to current state
The initial \gls{CRDT} concept was overhauled in cooperation with the original authors in~\cite{Preguica2019}.
We will mainly use the new \gls{CRDT} terminology introduced there.

\subsection{Consistency Models}
\label{subsec:consistency}
% SEC
The inherent trade-off between \emph{Consistency} and \emph{Availability} in the presence of network partitions in distributed systems led to the definition of a variety of consistency models.
A well-known consistency model is \emph{Eventual Consistency} (EC), which provides the following guarantees~\cite{shapiro-crdt}:
\begin{itemize}
	\item \emph{Eventual Delivery:}
	%\footnote{Note that Delivery in this context does \emph{not} refer to the reception of an update by a replica, but to the application of the update by the \gls{CRDT} to its state.}
		An update applied by one correct replica is eventually applied by every correct replica.
	\item \emph{Termination:} Every invoked method terminates.
	\item \emph{Convergence:} Correct replicas that applied the same set of updates eventually reach equivalent states.
\end{itemize}

\emph{\gls{SEC}} builds on top of EC,
and strengthens Convergence~\cite{shapiro-crdt}:

\begin{itemize}
	\item \emph{Strong Convergence:} Correct replicas that applied the same set of updates have equivalent states.
\end{itemize}

Whether two states are equivalent is application-dependent.
In our case, the state of two replicas is equivalent
if their graphs consist of identical vertices and edges.
Note that “the same \emph{set} of updates” means that while the updates are identical, they might be received or applied in different order.
The key difference between Convergence and Strong Convergence is that with Convergence,
replicas may coordinate with other replicas to find agreement on their state even after having applied updates.
Especially if the ordering of updates matters, this can lead to rollbacks.
With Strong Convergence, the agreement has to be immanent and implicit.

\subsection{Conflict-Free Replicated Data Types}
\label{subsec:crdt}
\glsreset{CRDT}
\glspl{CRDT} were first formalized in~\cite{shapiro-crdt}.
\glspl{CRDT} are an abstract data structure that allows for optimistic update execution (cf.~\cite{Saito2005})
while guaranteeing conflict-freedom upon network synchronization.
The system model of \glspl{CRDT} is based on a \emph{fail-silent} abstraction with a Causal Order Reliable Broadcast communication protocol (see \cref{sec:system-model}).
For objects that implement a \gls{CRDT} in a system with $n$ replicas,
\citeauthor*{shapiro-crdt} show that \gls{SEC} is ensured for up to $n-1$ replica failures~\cite{shapiro-crdt}.

Two conceptually different, but equally expressive types of \glspl{CRDT} are the \emph{operation-based} and the \emph{state-based} \gls{CRDT}.
Replicas implement functions to be invoked by clients to access or modify the state.
The key difference between operation- and state-based \glspl{CRDT} lies in the way of synchronization:
In state-based \glspl{CRDT}, all replicas periodically send their full state to all other replicas which then merge states.
In contrast, operation-based \glspl{CRDT} only synchronize upon changes.
Source replicas transmit state changes resulting from a client invocation as operations.
In \cref{sec:mx-crdt}, we show that the \gls{MEG} is an operation-based \gls{CRDT}.

Operation-based \glspl{CRDT} implement functions that can be classified as \texttt{update} or \texttt{query}.
A \texttt{query} function returns information on the current state of the replica.
Their counterpart, \texttt{update} functions, modify the state.
They comprise two steps:
At first, a \texttt{generator}\footnote{Originally introduced as \emph{prepare-update}} step is executed by the source replica.
It is side-effect-free, but returns an \emph{operation},
i.\,e., an encapsulation of the state changes.
% More precisely, an operation is a tuple that consists of the name of the function invoked,
% and a set of arguments that are generated based on the input arguments
% and possibly the local state of the replica at the time of the \texttt{generator} step.
A common example of a \texttt{generator} step is the creation of a unique object identifier for \texttt{update} functions that add an object to the state.
The second step is called \texttt{effector}\footnote{Originally introduced as \emph{effect-update}} step, it must be executed at every replica.
Thus, the source replica transmits the generated operation to all replicas using broadcast.
Upon reception of an operation,
each replica executes the \texttt{effector} step locally and applies the resulting changes to their state.~\cite{Preguica2011}

%In contrast, in state-based \glspl{CRDT}, \texttt{update} functions consist of a single step that is only executed at the source replica.
%Each replica periodically transmits their local state to all other replicas.
%To synchronize states, an additional \texttt{merge} function is needed: upon reception of a remote state,
%\texttt{merge} is used to combine the local and incoming state into a single, updated state.~\cite{Preguica2011}

In general, the data structure of a \gls{CRDT} cannot maintain a specific shape or topology,
such as a \gls{DAG}, as concurrent updates could violate invariants.
Specific implementations of \glspl{CRDT} can overcome this restriction however,
for example shown by the \emph{Operation-based Add-only monotonic \gls{DAG}} described in~\cite{Shapiro2011}.
Their implementation allows clients to collaboratively edit a DAG, by adding vertices and edges in separate updates.
Topology preservation is enforced by rejection of new edges that violate the current partial order of the \gls{DAG}.
In a similar vein, the \gls{MEG} is designed in a way that preserves its topology as rooted \gls{DAG} inherently, which we will show in \cref{subsec:preserve_dag}.

\section{Assumptions and Architecture}
\label{sec:system-model}
% In this section, we define the terminology utilized in this paper and describe our system model in detail.
We assume a finite and known set of replicas, each storing a full local copy of the \gls{MEG}.
% Each replica has a full local copy of a jointly managed, shared object, i.e., the \gls{MEG}.
% The \emph{state} of a replica is the current representation of the shared object.

%Invocation vs. execution:
%
%  * Invocations refers to a function being triggered to execute. Happens upon reception of an operation or through a client call.
%  * Execution refers to the actual procedure of running the code.
%
%Update vs. operation vs. method vs. step:
%
%  * Update and query: classes of functions
%  * Update/query function: refers to implementations
%  * Generator and effector are steps that must be implemented by each update function.
%  * Operations: Thing that is transmitted through the network. Generator functions yield operations, operations encapsulate the actual state change.
%
%Transaction: elementary unit that is being appended to the \gls{MEG, consisting of a vertex and at least one edge.

{\bfseries Assumptions.}
% Complete (everyone can reach everyone), asynchronous, public (adversary can read communication between all parties) network
We make use of two failure models, both based on the \emph{asynchronous} timing assumption,
which means that no upper bounds on computation or network transmission times are given.
The \emph{fail-silent} model~\cite[p. 63]{Cachin2011} implies that faulty replicas can crash-stop at any time,
while the remaining replicas have no means to reliably distinguish failure from communication or processing delays,
i.e., the fault is `silent'.
The \emph{fail-silent-arbitrary} model~\cite[p. 64]{Cachin2011} allows for arbitrary,
i.e. byzantine,
behavior of faulty replicas.
This includes intentionally malicious behavior.
In this model,
`silent' also means that replicas cannot detect whether another replica currently adheres to the protocol or not.

We call a replica \emph{correct} if it is non-faulty.
A fault is the failure to adhere to the protocol.
Additionally, in the fail-silent model, a replica is also considered faulty if it is crashing infinitely often, remains crashed forever or looses its memory upon recovery.~\cite{Cachin2011}

The formal CRDT-proof that we give in \cref{sec:mx-crdt} is based on the stricter assumption of a \emph{fail-silent} model.
In \cref{sec:weakening-assumptions} we extend the claims to the \emph{fail-silent-arbitrary} model.

Furthermore, we make use of two broadcast abstractions in this work.
Firstly, we use \emph{Reliable Broadcast}.
Informally, this abstraction provides a set of properties that guarantee that eventually,
the same set of messages is received by all correct replicas,
even if the sending replica fails~\cite{Cachin2011}.

\begin{itemize}
	\item \emph{Validity:} If a correct replica sends a message $m$, then it eventually receives $m$.
	\item \emph{No duplication:} Messages are received only once.
	\item \emph{No creation:} If a replica receives a message $m$ with sender $p$, then $m$ was previously sent by $p$.
	\item \emph{Agreement:} If a message $m$ is received by some correct replica, $m$ is eventually received by every correct replica.
\end{itemize}

The other, more powerful, abstraction is called \emph{Causal Order Reliable Broadcast}.
It extends the guarantees of Reliable Broadcast by also preserving the \emph{causal order} of messages~\cite{Cachin2011}:

\begin{itemize}
	\item \emph{Causal Delivery:} For any message $m_1$ and $m_2$ where the broadcast of message $m_1$ \emph{happened before} (cf.~\cite{Lamport1978})
		the broadcast of message $m_2$, $m_2$ is only received by replicas that have already received $m_1$.
\end{itemize}

The formal CRDT-proof in \cref{sec:mx-crdt} is based on the \emph{Causal Order Reliable Broadcast} abstraction.
In \cref{sec:weakening-assumptions} we relax this assumption to \emph{Reliable Broadcast} --- even in byzantine scenarios --- while maintaining the \gls{CRDT} properties.

{\bfseries Architecture.}
As we can see in \cref{fig:sys_model}, each \emph{client} is attached to a single \emph{replica} in which it trusts.
The client can request functions of class \texttt{query} or \texttt{update} at their replica,
as defined in \cref{subsec:crdt}.
As part of executing an \texttt{update} function, the source replica distributes operations, i.e., encoded state changes,
to all replicas using a broadcast \emph{communication abstraction}.
% Multicast provides one-to-many communication and encapsulates a set of guarantees for message transmissions on the sending and receiving interface.
% Multicast has the concept of a freely-specifyable destination group that may or may not include the sender,
% while broadcast always targets all system participants, with a context-defined meaning of *all*.

\begin{figure}
\resizebox{0.95\linewidth}{!}{
\includegraphics[]{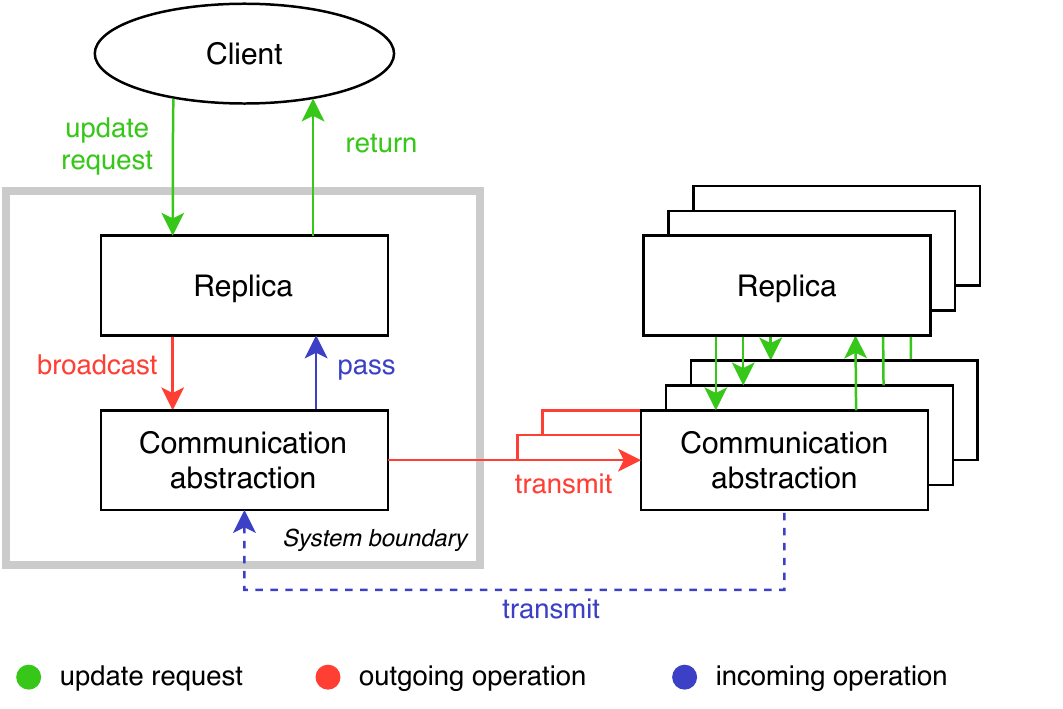}}
\caption{An update request by a client invokes the generator of an update function at the replica,
which creates an update operation.
This update operation is then transmitted to all replicas,
including the calling replica itself, through the communication abstraction.
The communication abstraction enforces guarantees about incoming operations, e.g. on their ordering.
}
\label{fig:sys_model}
\end{figure}

\begin{figure}
\resizebox{0.95\linewidth}{!}{
\includegraphics[]{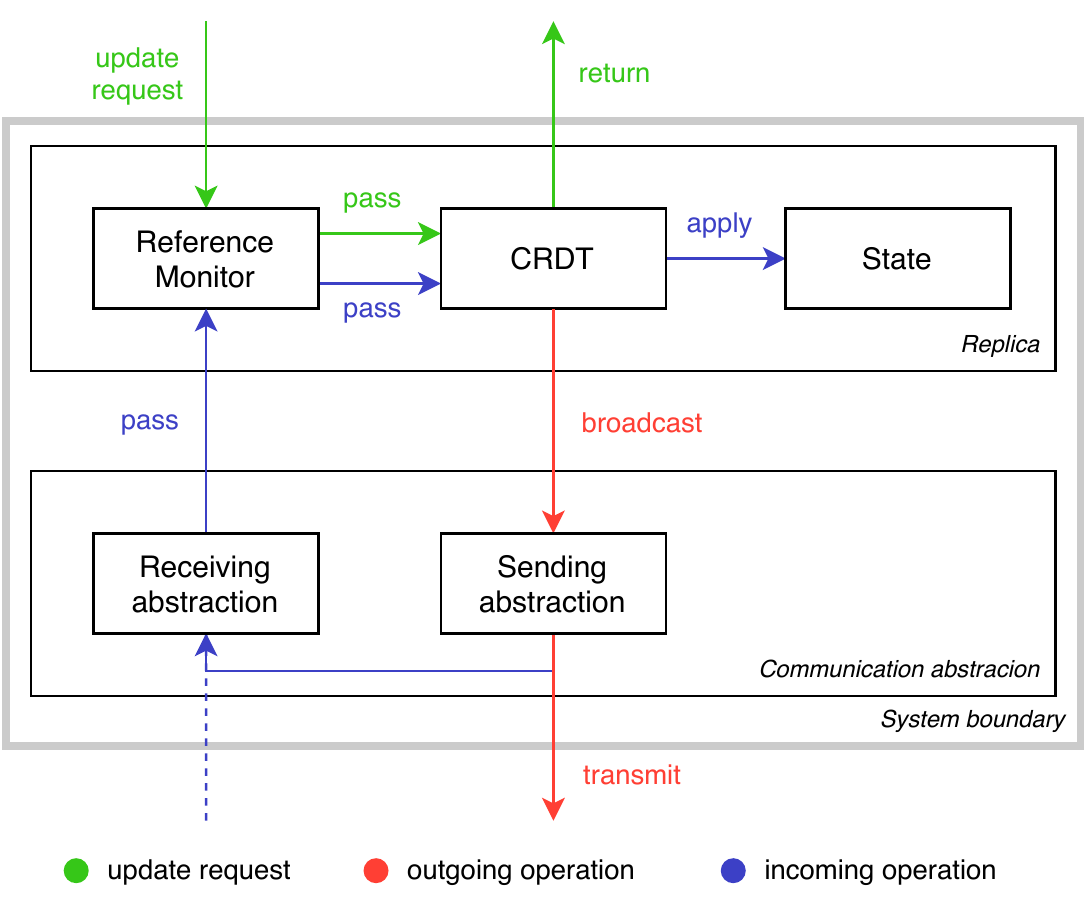}}
\caption{Inner workings of the source replica and communication abstraction when receiving an update request.
After entering the replica through the Reference Monitor, it is passed to the \gls{CRDT}.
The \gls{CRDT} encodes the state changes as an operation which is then broadcasted to all replicas using the communication abstraction.
Incoming update operations, again, pass the Reference Monitor before being processed at the \gls{CRDT} component.
The \gls{CRDT} then applies them to the local state of the replica.
}
\label{fig:inner_view}
\end{figure}

A more granular architectural view is provided in \cref{fig:inner_view}.
Inside a replica, the \emph{Reference Monitor} is the entry point for incoming requests from clients and operations from remote replicas.
It serves as a gate keeper to prevent further processing of operations or requests that violate the protocol or, in a byzantine setting, originate from unauthorized or unauthenticated parties.
Operations and requests that pass the Reference Monitor are handed to the \emph{\gls{CRDT}}.
The \gls{CRDT} can read and modify the state of the replica and is thus the core logic module of the replica.
In case of a \texttt{query} request, it accesses the state and returns the desired value.
For \texttt{update} requests, the generator of the update function encapsulates state changes into an operation that is passed to the communication abstraction.
The \gls{CRDT} then returns to the client to indicate success.
The communication abstraction sends the update operation to all replicas, including the calling replica itself.\footnote{While, depending on the specific communication abstraction, this is not required in an actual implementation, it is important on a conceptual level to ensure that the guarantees hold.}
% Incoming update operations are passed to the \gls{CRDT} via the Reference Monitor.
These update operations then trigger the local update effector which applies the changes to the state of the replica.

%The buffer serves as a backlog for messages that were received by other replicas and do not yet meet the preconditions for being processed further.
%The buffer allows to decouple the order in which a replica receives operations from the order in which the replica applies operations to its state.
%The
% For example in an unreliable network, it may handle network losses on the sending interface and thereby guarantee eventual delivery to all (correct) replicas.

\section{The MEG as CRDT}
\label{sec:mx-crdt}
Building upon the overview given in \cref{sec:mx_event_graph}, we formalize the \gls{MEG} as an operation-based shared object.
We show that the \gls{MEG} is a \gls{CRDT}
and thereby provides \acrfull{SEC}.
The underlying assumption for this section is a fail-silent model with Causal Order Reliable Broadcast. This is in accordance with the assumptions used by \citeauthor*{Preguica2011} for \glspl{CRDT} (cf.~\cref{subsec:crdt})~\cite{Preguica2011}.

\subsection{Formalization of the MEG}
\label{subsec:formalization}
To define the Matrix Event Graph as a \gls{CRDT}, we adopt the formal definition introduced with the concept of operation-based \glspl{CRDT} in~\cite{Preguica2011, shapiro-crdt} and use the pseudo code notation by \citeauthor*{Preguica2018}~\cite{Preguica2018}.

An object is formally defined as $(S, s^0, q, t, u, P)$:
$S$ is the space of possible per-replica states,
and $s^0 \in S$ is the \texttt{initial} state of every replica.
$q$ is the set of \texttt{query} functions.
\texttt{update} functions are composed of a \texttt{generator} step $t$ and an \texttt{effector} step $u$.
The \texttt{effector} $u$ may contain a \emph{delivery precondition} $P$,
which must be fulfilled before an operation is being processed further.
Notably, $P$ only delays the execution, it does not abort the \texttt{effector} step.
When a replica with \texttt{state} $s \in S$ executes a step $u$, we denote this as $s \bullet u$, which yields a new \texttt{state}.
As shorthand for the \texttt{state} at replica $i$, we write $s_i \in S$.

We provide a pseudo code implementation of the \gls{MEG} as an operation-based \gls{CRDT} in \cref{lst:pseudocode}.
A vertex is a tuple $(e, w)$ that represents an event in the \gls{MEG}.
$w$ is a unique identifier for the event,
whereas $e$ contains the actual event.
Edges represent a potential causal relationship between child and parent vertex.
The \texttt{state} is a \gls{DAG}, defined through a set of vertices and a set of edges.
Initially ($s^0$), it consists of a single vertex and no edges.
The \texttt{query} functions \texttt{lookup}, \texttt{hasChild}, \texttt{getExtremities} and \texttt{getState}
allow to access the replica state without modification.
\texttt{lookup} checks whether a vertex with a given identifier is part of the current state.
Similarly, \texttt{hasChild} checks for the existence of child vertices for a given vertex.
\texttt{getExtremities} returns the current set of forward extremities, whereas \texttt{getState} returns the \texttt{state}.
The \texttt{update} function \texttt{add} is used to append new events to the \gls{MEG}.
Its \texttt{generator} step $t_{add}$ takes the event $e$ as input argument.
Based on the \texttt{state} of the source replica at that time, a set $L$ of forward extremities is created.
Lastly, a unique identifier $w$ is chosen.
The parameters $w$, $e$ and $L$, and a reference to the update function \texttt{add} are returned together an constitute the update operation.

The \texttt{effector} $u_{add}$ is invoked by the operation that was created in the \texttt{generator} step.
Once the delivery precondition $P$ is fulfilled,
the new vertex $(e,w)$ and the new edges $((e,w),(e_p, w_p))$ for each $(e_p, w_p) \in L$ are added to the \texttt{state},
i.e., the set of vertices and edges, respectively.
Since \texttt{add} is the only update function, we will drop it as a subscript for the steps $t$ and $u$ from now on.

\begin{lstlisting}[mathescape={true}, float,
caption={Pseudo code implementation of the Matrix CRDT.
\texttt{query} and \texttt{update} indicate the type of the respective functions,
\texttt{generator} and \texttt{effector} denote the two steps of an \texttt{update} function.
\texttt{pre} is the delivery precondition $P$.},
label={lst:pseudocode}]
state set $S=(V, E)$ // vertices $\color{gray}V$ consist of event $\color{gray}e$ and uid $\color{gray}w$: $\color{gray}(e,w)$, $\color{gray}E$ are edges: $\color{gray}E \subseteq V \times V$
	initial $(\{(e_0, w_0)\}, \emptyset)$
	query lookup (uid $w$) : boolean
		return $\exists ((e', w') \in V) : w' == w$
	query hasChild (vertex $(e, w)$) : boolean
		return $\exists ((e', w') \in V) : ((e', w'), (e, w)) \in E$
	query getExtremities () : list of vertices
		return $L=\textstyle\bigcup_{(e,w)\in V : \text{ not hasChild(}(e,w)\text{)}} \{(e,w)\}$
	query getState () : set
		return $S$
	update add
		generator (event $e$)
			let $L =$ getExtremities()
			let $w =$ unique()
			return add, ($e, L, w$)
		effector (event $e$, list of vertices $L$, uid $w$)
			pre: $\forall (e_p, w_p) \in L$: lookup($w_p$)
			$V = V \cup \{(e,w)\}$
			$E = E \cup \textstyle\bigcup_{(e_p, w_p) \in L} \{((e,w), (e_p, w_p))\}$
\end{lstlisting}

\subsection{Preservation of the DAG topology}
\label{subsec:preserve_dag}

As mentioned in \cref{subsec:crdt}, the preservation of a specific shape,
such as a \gls{DAG}, is not possible in a generic way for \glspl{CRDT}.
We now show that the \gls{MEG} always preserves the desired data structure of a rooted \gls{DAG} by design as \cref{lm:dag-property}.

\begin{lemma}
\label{lm:atleastoneedge}
  There is at least one forward extremity at any time after initialization of the \gls{MEG}.
\end{lemma}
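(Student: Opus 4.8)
The plan is to establish the statement as an invariant preserved by every step of the \gls{MEG}, proved by induction over the sequence of \texttt{effector} steps applied at a replica. Note first that the \texttt{query} functions of \cref{lst:pseudocode} do not alter the \texttt{state}, and that \texttt{add} is the only \texttt{update} function; hence the set of forward extremities can change only when the \texttt{effector} $u$ is executed, and it suffices to track that set across such executions, starting from the initial \texttt{state}.

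For the base case I would take the \texttt{state} immediately after initialization, $s^0 = (\{(e_0,w_0)\}, \emptyset)$: since $E = \emptyset$, the root vertex has no children, so it is a forward extremity and \texttt{getExtremities} returns a non-empty set. For the inductive step, assume the current \texttt{state} has at least one forward extremity and let $u$ process an operation $(e, L, w)$. Because $w$ is produced by \texttt{unique()} in the \texttt{generator} and the broadcast primitive guarantees no duplication, the vertex $(e,w)$ does not already occur in $V$, so no edge of $E$ mentions it. The \texttt{effector} adds $(e,w)$ to $V$ and inserts only edges of the form $((e,w),(e_p,w_p))$, in each of which $(e,w)$ is the first component, i.e. the child. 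Therefore no edge of the updated \texttt{state} has $(e,w)$ as its second component, so $\neg\,\texttt{hasChild}((e,w))$ still holds and $(e,w)$ is a forward extremity of the new \texttt{state}. (Vertices occurring in $L$ may lose their extremity status --- this is exactly the fork-joining behaviour described in \cref{sec:mx_event_graph} --- but the freshly added vertex always compensates.) This closes the induction, and since the \texttt{state} does not change between \texttt{effector} steps, the property holds at every time after initialization.

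I do not anticipate a real obstacle: morally this is just the fact that a finite, non-empty \gls{DAG} always contains a vertex of in-degree zero. The two points needing a little care are (i) confirming from \cref{lst:pseudocode} that no operation other than \texttt{add} can destroy a forward extremity, and (ii) ensuring that the vertex introduced by \texttt{add} is genuinely new so that it cannot already possess children --- this is where the uniqueness of $w$ and the no-duplication property of Reliable Broadcast enter. An alternative, non-inductive proof would observe that at any time the \gls{MEG} is a finite, non-empty, rooted \gls{DAG}, so the partial order it represents has a maximal element, and the maximal elements are precisely the forward extremities; I would nonetheless prefer the inductive version, as it stays close to the operational formalization and its bookkeeping is reused when proving the remaining structural claims about the \gls{MEG}.
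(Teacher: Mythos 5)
Your proof is correct and follows essentially the same route as the paper: induction on the operations, with the root serving as the base case and the freshly appended vertex --- having only outgoing edges --- serving as the new forward extremity in the inductive step. The extra care you take with the uniqueness of $w$ and the no-duplication property is a harmless refinement of the paper's more terse argument, not a different approach.
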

\begin{proof}
  By induction.\\
  \emph{Base case:} After initialization of the \gls{MEG}, the \gls{DAG} consists of a single root and no edges. Therefore, the root is a forward extremity as it has no incoming edges.\\
  \emph{Induction step:} Given a valid \gls{MEG}, executing \texttt{add} appends a new vertex with only outgoing edges.
  Thus, that new vertex is a forward extremity.
\end{proof}

\begin{lemma}
\label{lm:dag-property}
  The \gls{MEG} maintains the properties of a rooted DAG at all times: (i) single root, (ii) acyclicity, and (iii) weak connectedness.
\end{lemma}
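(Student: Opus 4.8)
The plan is to prove each of the three properties separately, using induction on the sequence of applied \texttt{add} operations — which is the natural structure given that \texttt{add} is the only update function and the initial state $s^0$ is a fixed single-vertex DAG. For the \emph{base case} in all three parts, the state is $(\{(e_0,w_0)\}, \emptyset)$: it has exactly one vertex (hence a single root, namely $(e_0,w_0)$, since it has no outgoing edges), it is trivially acyclic, and it is trivially weakly connected. The \emph{induction step} assumes the current state is a rooted DAG and considers the effect of one \texttt{effector} $u$ execution, which — once its precondition $P$ holds — adds a fresh vertex $(e,w)$ together with edges from $(e,w)$ to each $(e_p,w_p)\in L$, where $L$ was the set of forward extremities at the (possibly different) source replica. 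A crucial preliminary observation is that the unique identifier $w$ returned by \texttt{unique()} guarantees $(e,w)$ is not already in $V$, so the new vertex is genuinely new and all new edges are outgoing from it; I would state this as the key structural fact the rest of the argument leans on.

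For \textbf{(ii) acyclicity}: the new vertex $(e,w)$ has only outgoing edges and no incoming edges in the updated graph, so it cannot lie on any directed cycle; any cycle in the updated graph would therefore have to lie entirely within the old vertex/edge set, contradicting the induction hypothesis. For \textbf{(iii) weak connectedness}: by hypothesis the old graph is weakly connected, and $L$ is nonempty by \cref{lm:atleastoneedge} (this is exactly where that lemma is used), so the new vertex is joined by at least one (undirected) edge to a vertex of the old, already-connected graph; hence the whole graph remains weakly connected. For \textbf{(i) single root}: "root" means the unique vertex with no outgoing edges. The new vertex $(e,w)$ has outgoing edges (again because $L\neq\emptyset$ by \cref{lm:atleastoneedge}), so it is not a root; and adding outgoing edges from $(e,w)$ does not remove any outgoing edge from any old vertex, so the set of old vertices lacking outgoing edges is unchanged — still exactly the one root $(e_0,w_0)$. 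Combined with weak connectedness, uniqueness of the minimal element follows, matching the "rooted DAG" terminology cited earlier.

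The main subtlety — and the step I would be most careful about — is the interplay between the \texttt{generator} running at the source replica and the \texttt{effector} running at an arbitrary (possibly lagging) replica: the set $L$ of parents was computed against the source replica's state, not the local one, so a priori the parents in $L$ need not be forward extremities locally, or might even be unknown locally. This is precisely what the delivery precondition $P$ (namely $\forall (e_p,w_p)\in L:\ \texttt{lookup}(w_p)$) handles: the effector is delayed until all parent vertices are present in the local DAG, so when the new edges are added they point to vertices that genuinely exist in the local $V$. I would emphasize that $P$ ensures no "dangling" edges are ever created, which is what keeps the local graph a well-formed DAG; nonemptiness of $L$ itself is independent of which replica computed it, since \cref{lm:atleastoneedge} applies to every replica's state, so the single-root and weak-connectedness arguments go through at every replica regardless of delivery order. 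No appeal to the broadcast ordering guarantees is needed for this lemma — only the precondition $P$ and \cref{lm:atleastoneedge} — which is worth noting since it isolates why topology preservation is purely a local, design-level property of the data type.
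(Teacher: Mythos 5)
Your proof is correct and follows essentially the same route as the paper's: induction over \texttt{add} operations, using \cref{lm:atleastoneedge} for $|L|>0$ and the delivery precondition $P$ to guarantee the referenced parents exist locally, then observing that all new edges are outgoing from the fresh vertex so no cycles, no new roots, and no isolated vertices arise. Your write-up is merely more explicit about the three properties and the source-replica/local-replica subtlety, which the paper's proof treats more tersely.
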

\begin{proof}
By induction. \\
\emph{Base case:} The initial state $s^0$ contains a single vertex and no edges.
% The \gls{MEG} therefore has a single root (without outgoing edges), is weakly connected, preserves acyclicity and thus, is a rooted DAG.\\
This \gls{MEG} therefore is a rooted DAG.\\
\emph{Induction step:} Given replicas $i$ with state $s_i=(V_i,E_i)$, where $s_i$ is a rooted DAG, an arbitrary source replica $r$ is selected.
% Invoking any \texttt{query} function $q$ is side-effect-free and therefore preserves the rooted DAG.
% If the \texttt{update} function \texttt{add} is invoked, we first enter the \texttt{generator} step $t$.
As part of the \texttt{generator} step $t$, the set of forward extremities is determined as $L$, and a unique identifier $w$ created.
By \cref{lm:atleastoneedge}, $|L|>0$.
% Finally, $t$ returns the arguments $e, L, w$.
Since $t$ is side-effect-free, the \gls{MEG} remains unchanged.

Consequently, the execution of the \texttt{effector} step $u$ is triggered at each replica $i$.
$u$ awaits the fulfillment of the delivery precondition $P$, which ensures that $s_i$ contains all parents that are referenced by $L$.
Finally, applying $u$ yields the new replica states $s_i'$:
\begin{align*}
  s_i'= (V_i \cup \{(e,w)\}, E_i \textstyle\bigcup_{(e_p, w_p) \in L} \{(e,w), (e_p,w_p))\}).
\end{align*}
Since all new edges are outgoing from the new vertex $(e,w)$, no new cycles can be formed, and existing roots remain roots.
No new roots or isolated vertices have been added as the new vertex has outgoing edges.
Because all $s_i$ were assumed to be rooted \glspl{DAG}, all $s_i'$ must be rooted \glspl{DAG}.
\end{proof}

\subsection{Proof of CRDT properties}
\label{subsec:crdt_proof}
Now, we show that \gls{MEG} implements an operation-based \gls{CRDT}
and thus guarantees \gls{SEC}.
We structure the proof by the \gls{SEC} properties Strong Convergence, Eventual Delivery, and Termination (cf. \cref{subsec:consistency}).

{\bfseries Strong Convergence.}
For Strong Convergence, we need to show commutativity of concurrent updates
and causal order reception of operations for noncommutative updates.

Commutativity for updates is determined by the commutativity of their operations.
Two updates ($t, u$) and ($t', u'$) commute,
iff for any reachable state $s \in S$ for which the delivery precondition $P$ is satisfied for both $u$ and $u'$:
% Definition of commutativity from that paper:
% „For two updates that are enabled at the same time,
% we say that they commute if after the execution of one of the updates,
% the other update remains enabled and the result of both executions is the same, regardless of the execution order.“
(i) $P$ is still satisfied for $u$ in $s \bullet u'$, and
(ii) $s \bullet u \bullet u' \equiv s \bullet u' \bullet u$.~\cite{shapiro-crdt}

\begin{lemma}\label{lm:P_remains}
Once an update operation satisfies $P$ for some state $s$, it will continue to satisfy $P$ for any state $s'$ following $s$.
\end{lemma}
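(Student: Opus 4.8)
The plan is to prove \cref{lm:P_remains} by analyzing what the delivery precondition $P$ actually asserts and showing it is a \emph{monotone} predicate along the sequence of state transitions that the \gls{MEG} admits. Recall that for the \texttt{add} operation carrying parameters $(e, L, w)$, the precondition is $P \equiv \forall (e_p, w_p) \in L : \texttt{lookup}(w_p)$, i.e., every parent vertex referenced in $L$ is already present in the local vertex set $V$. So the statement to establish is: if $V \supseteq \{(e_p, w_p) : (e_p, w_p) \in L\}$ holds for a reachable state $s = (V, E)$, then the same containment holds for every state $s'$ reachable from $s$.

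First I would make precise what ``a state $s'$ following $s$'' means: $s'$ is obtained from $s$ by a finite sequence of \texttt{effector} applications $s' = s \bullet u_1 \bullet \cdots \bullet u_n$, where each $u_j$ is an \texttt{add} effector whose own precondition was satisfied at the point it was applied. (The \texttt{query} functions are side-effect-free and the \texttt{generator} step $t$ is explicitly side-effect-free by the formalization, so only \texttt{effector} steps change the state.) Then I would argue by induction on $n$, with the single effector application as the key step: inspecting $u_{add}$ in \cref{lst:pseudocode}, its only effect on $V$ is $V \mapsto V \cup \{(e', w')\}$ for the new vertex, so $V$ is never shrunk — the vertex set grows monotonically under $\bullet$. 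Consequently, if $\{(e_p, w_p) : (e_p,w_p)\in L\} \subseteq V$, then a fortiori it is contained in $V \cup \{(e',w')\}$, and the precondition for the operation carrying $L$ still holds in $s \bullet u_{add}$. Chaining this over the $n$ steps gives the claim.

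The main obstacle — really the only subtlety worth flagging — is making sure the quantifier over ``any state $s'$ following $s$'' is interpreted as ``any state reachable via legal effector applications,'' rather than an arbitrary element of $S$: the lemma is false for arbitrary $s' \in S$ since $S$ as a raw set contains DAGs with fewer vertices. So I would be careful to tie the argument to the operational semantics (the $\bullet$ relation) and to the fact, already established structurally in \cref{subsec:preserve_dag}, that reachable states arise only by the \texttt{add} effector. Once that is pinned down, the monotonicity of $V$ is immediate from the pseudo code and the proof is short. This lemma then feeds directly into the commutativity argument for Strong Convergence, since it discharges obligation (i) — that $P$ remains satisfied for $u$ in $s \bullet u'$ — automatically for every pair of concurrent \texttt{add} operations.
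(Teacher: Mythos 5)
Your proof is correct and follows essentially the same route as the paper's: the effector only ever enlarges the vertex set ($V \mapsto V \cup \{(e',w')\}$), so the containment $\forall (e_p,w_p)\in L : (e_p,w_p)\in V$ required by $P$ is preserved. You are somewhat more explicit than the paper — it only writes out the single-effector step and leaves the extension to arbitrary following states implicit, whereas you spell out the induction over a sequence of effector applications and pin down the meaning of ``following'' via reachability under $\bullet$ — but this is a refinement of the same argument, not a different one.
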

\begin{proof}
  Consider any update operation $u(e,L,w)$ that satisfies $P$ in some state $s=(V,E)$.
Applying an arbitrary operation $u(e',L',w')$ to $s$ yields a new state $s'$:
\begin{align*}
s' &= s \bullet u(e', L', w') \\
  &= (V \cup \{(e',w')\}, E \cup \textstyle\bigcup_{(e_p, w_p) \in L'}\{(e', w'), (e_p, w_p)\})
\end{align*}
$P$ being satisfied in $s$ implies that it remains satisfied for $s'$:
\begin{align*}
  &\forall (e_p,w_p) \in L : (e_p, w_p) \in V \\
  \Rightarrow &\forall (e_p,w_p) \in L : (e_p, w_p) \in V \cup \{(e',w')\}
\end{align*}
\end{proof}

\begin{lemma}\label{lm:crdt-com}
Any two operations $u(e_i, L_i, w_i)$ and $u(e_j, L_j, w_j)$ commute with each other.
\end{lemma}
\begin{proof}
We consider any state $s=(V,E)$ and two update operations $u(e_i, L_i, w_i)$, $u(e_j, L_j, w_j)$ that both satisfy $P$ in $s$.

As shown in \cref{lm:P_remains}, after applying one operation, the other operation still satisfies $P$.
It remains to show that the resulting states are equivalent, regardless of the order in which the effectors are executed.
%
% \begin{align*}
% &s \bullet u(e_i, L_i) \bullet u(e_j, L_j) = (V_i'',E_i'') \texttt{ and} \\
% &s \bullet u(e_j, L_j) \bullet u(e_i, L_i) = (V_j'',E_j'')
% \end{align*}
%
Since $u$ only performs a union of the edge and vertex sets, by commutativity of the union operator,
commutativity of $u$ follows:
%      &= V \cup \{(e_j, w_j)\} \cup \{(e_i, w_i)\}\\
%     &= E \cup_{p\in L_j}\{(e_j, w_j), p\} \cup_{p\in L_i}\{(s_i, w_i), p\} \\
%
% V_i'' &= V \cup \{(d_i, w_i)\} \cup \{(d_j, w_j)\} = V_j''\\
% E_i'' &= E \cup_{p\in L_i}\{(d_i, w_i), p\} \cup_{p\in L_j}\{(d_j, w_j), p\} = E_j''
% \Rightarrow
$s \bullet u(e_i, L_i, w_i) \bullet u(e_j, L_j, w_j)
  \equiv s \bullet u(e_j, L_j, w_j) \bullet u(e_i, L_i, w_i)$
\end{proof}

% Since commutativity of update operations is achieved without loss of information, the \gls{MEG} provides coalescing commutativity (cf.~\cite{Shapiro2007}).

As we have shown, \gls{MEG} updates are commutative and Strong Convergence is guaranteed.
This is possible because all required properties of the \gls{MEG} are preserved by design (cf.~\cref{lm:dag-property}).

Since the \gls{MEG} encodes causal relations as edges in the data structure,
the delivery precondition $P$ can ensure that these dependencies are respected without sacrificing commutativity.

{\bfseries Eventual Delivery.} For Eventual Delivery, we need to show that $P$ is eventually satisfied for all operations.
% (1) all updates are eventually received.
% (2) $P$ is satisfied at the source replica.
% (3) $P$ is and remains satisfied once a certain set of updates is received (no matter how many other updates are applied before/afterwards).
% (1+2+3) => $P$ is eventually satisfied.

\begin{lemma}\label{lm:P_satisfied}
$P$ is immediately satisfied on causally ordered message reception.
\end{lemma}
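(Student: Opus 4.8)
The plan is to trace the precondition $P$ of a received operation back to the generator step that produced it, and then invoke the Causal Delivery guarantee of Causal Order Reliable Broadcast. First I would unfold the definition: for an operation $u(e,L,w)$, the precondition $P$ demands $\texttt{lookup}(w_p)$ for every $(e_p,w_p)\in L$, i.e.\ every referenced parent vertex is already present in the receiving replica's state. By inspection of the generator $t$, the set $L$ it returns equals $\texttt{getExtremities}()$ evaluated on the source replica's \gls{DAG} at generation time; hence every $(e_p,w_p)\in L$ is a vertex of that \gls{DAG}. Each such vertex is either the root $(e_0,w_0)$, which lies in the initial state $s^0$ of \emph{every} replica, or was inserted into the source's state by the effector of some earlier operation $u_p(e_p,L_p,w_p)$.

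Next I would establish the causal relationship. Since an effector only runs on a replica after that replica has received the corresponding operation, and an operation is received only after it has been broadcast, the source replica of $u(e,L,w)$ must have received and applied each such $u_p$ strictly before it generated and broadcast $u(e,L,w)$. Therefore the broadcast of $u_p$ \emph{happened before} the broadcast of $u(e,L,w)$, and by Causal Delivery any replica that receives $u(e,L,w)$ has already received $u_p$, for every parent $(e_p,w_p)\in L$ that is not the root.

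It then remains to upgrade ``has already received $u_p$'' to ``has already applied $u_p$'', which I would do by induction on the length of the longest chain of causal predecessors of an operation (equivalently, on the depth in the \gls{DAG} of its newest parent). In the base case the only parent is the root $(e_0,w_0)\in s^0$, so $P$ holds in every state and the effector is applied immediately on reception. For the step, every parent-producing operation $u_p$ has strictly smaller depth, so by the induction hypothesis it is applied immediately upon reception; hence by the time $u(e,L,w)$ is received all its parents are already in the local state and $P$ is satisfied without delay. The main obstacle is precisely this last bridge: Causal Delivery only orders \emph{reception}, so one must argue that nothing stalls the intervening effectors --- which is exactly what the induction delivers, together with the fact (\cref{lm:P_remains}) that once $P$ is satisfied it stays satisfied, so an applied parent is never retracted.
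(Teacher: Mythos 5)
Your proof is correct and follows essentially the same route as the paper's: the parents in $L$ were in the source replica's state at generation time, so the operations that created them causally precede the broadcast of the new operation, and Causal Delivery together with the monotonicity of $P$ (\cref{lm:P_remains}, ``vertices are never removed'') yields satisfaction upon reception. Your explicit induction on causal depth to upgrade ``already received'' to ``already applied'' makes rigorous a step the paper's shorter proof leaves implicit, but it is a refinement of the same argument rather than a different one.
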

\begin{proof}
$P$ ensures that all referenced parents are part of the local \texttt{state}.
Since \texttt{getExtremities} selects all parents from the current \texttt{state}, $P$ must be satisfied at the source replica after the \texttt{generator} step.
% \begin{align*}
%    &\forall (e_p, w_p) \in L: (e_p, w_p) \in V \\
%    \wedge &\forall (e_p, w_p) \in L : \nexists ((e', w'),(e_p, w_p)) \in E\\
%    \wedge &|L| > 0  \\
%    \Rightarrow &\forall (e_p, w_p) \in L: (e_p, w_p) \in V
% \end{align*}
Once satisfied, $P$ remains satisfied since vertices are never removed.
Therefore, receiving all causally preceding operations is sufficient to satisfy $P$ at every replica.
Consequently, having causal order message reception, $P$ is immediately satisfied on reception.
\end{proof}

{\bfseries Termination.}
Given the implementation in \cref{lst:pseudocode}, we can see that there are no loops or recursive calls in either of the functions, therefore, they will eventually exit.
Knowing that $P$ is immediately satisfied given causal order message reception, as shown in \cref{lm:P_satisfied}, we can conclude that Termination holds.

{\bfseries Conclusion.}
We have shown Termination and eventual satisfaction of $P$.
\cref{lm:crdt-com} shows commutativity of concurrent updates.
Therefore, all properties of an operation-based \gls{CRDT} are met by the \gls{MEG}.

\section{Relaxation of Assumptions and Reality Check for Byzantine Settings}
\label{sec:weakening-assumptions}
In this section, we evaluate the assumptions we have used for the \gls{CRDT} proof of the \gls{MEG} in \cref{sec:mx-crdt} and relax them wherever possible without violating previously shown guarantees.
We show that Matrix currently provides no \gls{SEC} because of its unreliable broadcast protocol.
However, when having a Reliable Broadcast abstraction that provides Validity and Agreement,
the \gls{MEG} can provide \gls{SEC} in byzantine $n > f$ environments
with $n$ total and $f$ faulty participants.
This is possible since conflicts, created by byzantine replicas that share different update operations with different replicas, can always be resolved.

\subsection{Relaxation of the Broadcast Assumptions}
\label{subsec:weak_comm}

In \cref{sec:mx-crdt}, we assumed a Causal Order Reliable Broadcast abstraction,
which is commonly used with \glspl{CRDT}.
Yet in reality, the communication abstraction employed by Matrix provides much weaker guarantees.
% Together with the causal relationships of events depicted in the \gls{MEG},
% this hints that Causal Order Reliable Broadcast is not necessary for the \gls{MEG} to function.
We thus revisit the assumptions
and show that the Causal Delivery property of the broadcast abstraction is not necessary\footnote{The No Duplication
property is also not necessary: Because each vertex has a unique identifier $w$, and outgoing edges cannot be added
afterwards, it suffices to make the effector conditional on the presence of the vertex in the replica state to gain
idempotent effectors that can cope with multiple receptions of identical operations.}
and can be removed without violating Strong Convergence for safety as well as Eventual Delivery and Termination for liveness (cf. \cref{sec:background} for definition and \cref{sec:mx-crdt} for fulfillment).

{\bfseries Strong Convergence.}
To provide Strong Convergence, replicas must receive noncommutative update operations in their causal order.
As every update operation commutes with every other, as shown in \cref{lm:crdt-com},
Strong Convergence does not require any ordering guarantees by the communication abstraction.

{\bfseries Eventual Delivery.}
In \cref{lm:P_satisfied}, we used the Causal Delivery property to show that the delivery precondition $P$ is immediately satisfied.
However, Eventual Delivery only requires that correct update operations received by a replica \emph{eventually} satisfy $P$,
so that they can eventually be applied.

It therefore remains to show that the delivery precondition $P$ is eventually satisfied without Causal Delivery.
Given an update operation, $P$ is satisfied if all referenced parents are part of the \texttt{state} of a replica.
If an operation satisfies $P$ at some point in time,
it continues to satisfy $P$ thereafter,
because the \gls{MEG} is an append-only data structure.
As per \cref{lm:P_satisfied}, $P$ is satisfied for any given operation after the \texttt{generator} step at the source replica finishes.
Therefore, all referenced parents must have been previously added to the \texttt{state} and therefore be part of some update operation.
If an update operation does not satisfy $P$ at some replica due to reordering of operations by the broadcast abstraction,
replicas can delay and buffer the update operation until $P$ is satisfied.
Owing to the Validity and Agreement properties of the broadcast abstraction (cf. \cref{sec:system-model}), all missing update operations are eventually received by all correct replicas.
As correct replicas apply all operations that they received and that satisfy $P$,
all parents must eventually be part of their \texttt{state}.
Consequently, for correct replicas,
$P$ must eventually be satisfied for every update operation.

{\bfseries Termination.}
Since all method executions terminate, and since we have shown that in the new setting, $P$ is eventually satisfied for all operations, the Termination property still holds.

Thus,
% with introduction of buffering of update operations for eventual satisfaction of precondition $P$,
the \gls{MEG} only requires a weak form of Reliable Broadcast, and does not depend on Causal Delivery.

\subsection{Tolerating Byzantine Failures}
\label{subsec:byzantine}
In the following, we replace the \emph{fail-silent} failure model with the \emph{fail-silent-arbitrary} model.
We assume that the adversary cannot permanently block broadcast communication between two correct replicas.
In a system with $n$ replicas,
the adversary can induce byzantine faults in up to $f$ replicas
%, including their clients,
with $n > f$.
This means that a client's trusted replica might be the only correct replica in the system.
As the \gls{MEG} does not strive for consensus,
it is able to cope with such a hostile environment.
To model the capabilities of byzantine replicas in a distributed systems that implement a \gls{CRDT},
\citeauthor*{Zhao2016} introduce a three-part threat model~\cite{Zhao2016}
which consists of attacks on the membership service, malicious updates, and attacks on the Reliable Broadcast service.
To keep focus on the \gls{MEG}, we will only touch on the issues related to the membership service and malicious updates,
and put the attack on the Reliable Broadcast service at the center of attention.

{\bfseries Membership service.} With respect to a membership service, we assume a known set of replicas that does not change.
Still, we want to note that attacks on the membership service for dynamic groups may prevent replicas from receiving some or all update operations, which could affect Eventual Delivery.
We consider this as an important, but somewhat separate topic.

{\bfseries Malicious updates.} Malicious replicas could attempt to inject updates into the data structure that are not compliant with the protocol.
In general, to address threats from malicious updates,
the Reference Monitor is the endpoint for all external interfaces of the replica.
It ensures authorization,
authentication, integrity, and general protocol compliance of incoming operations.
% It prevents further processing of faulty operations.
Update operations that pass the Reference Monitor can therefore be handled like non-byzantine,
i.e., correct operations.
A serious attack could be based on non-unique event identifiers.
However, unique event identifiers can be ensured in a byzantine environment by generating event identifiers from the event data using a collision-resistant hash function.
%\footnote{Making non-identical events with the same content distinguishable is among the reasons why Matrix adds a client identifier, replica identifier and timestamp to the event data.}
This way, Reference Monitors can verify whether an event identifier is valid by recomputing the hash themselves.
To prevent the injection of unauthorized update operations,
impersonation needs to be prevented as well.
This can be achieved by means of asymmetric encryption, i.e., by cryptographically signing update operations and a Public Key Infrastructure that is trusted by all correct replicas.
Signatures also ensure integrity of update operations,
so that update operations that are not directly received from the source replica cannot be altered unobtrusively.
Therefore,
authenticated update operations allow us to drop the No Creation property of the broadcast abstraction,
as the Reference Monitor can now identify forged or tampered update operations itself.
The creation of operations that are not protocol compliant,
such as events with non-existing or non-(con)current extremities as parents,
might incur load on performance, but does not threaten the correct operation of the \gls{MEG}.

{\bfseries Attacks on the Reliable Broadcast Abstraction.}
Attacks on the Reliable Broadcast abstraction may lead to correct replicas that receive different operations,
potentially causing permanent divergence in replica states.
While fail-silent-arbitrary Reliable Broadcast algorithms exist (cf.~\cite[p. 121]{Cachin2011}),
they are generally difficult to scale to many replicas,
as communication complexity increases in the number of replicas.
However, we do not require all of their properties due to the commutative and conflict-free nature of the \gls{MEG}.
Using asymmetric encryption, the No Creation property is not required and the broadcast abstraction is left to provide Validity and Agreement.
As Validity is only concerned with correct sending replicas,
faulty replicas can mainly attack Agreement by performing equivocation,
i.e. broadcasting different update operations to different replica subsets,
or not broadcasting an update operation to all replicas~\cite{non-equivocation}.
We show that equivocation,
a costly problem in fail-arbitrary Reliable Broadcast algorithms,
is not an issue for the Matrix Event Graph due to its distinct structure.
We recall that for Agreement,
an operation that is received by some correct replica eventually has to be received by every correct replica.

Under the assumption that malicious replicas have no means to fabricate a hash collision,
they can only send operations with different event identifiers when trying to create inconsistencies.
However, due to the conflict-free nature of an operation-based \gls{CRDT},
both operations can be received and processed by correct replicas.
A byzantine replica that performs equivocation
can therefore be modeled as two replicas that crash while sending independent update operations.
Therefore, the broadcast abstraction only has to ensure that eventually,
\emph{any} operation received at some correct replica will be received at every correct replica.

In Matrix, Validity is provided since source replicas immediately apply update operations to their local state.
However, with respect to Agreement, Matrix replicas use a `best-effort broadcast' that is implemented via unicast transmissions to all replicas.
This alone does not provide Agreement even in fail-silent systems without byzantine attackers,
as a failing replica could only provide a limited number of correct replicas with the update operation.
To mitigate this issue, Matrix uses a backfilling mechanism
which allows replicas to specifically request missing operations from other replicas.
It is used when a replica receives an update operation for which the parents are not part of the replica state.
With this mechanism,
Matrix achieves Agreement under the assumption of constant \gls{MEG} progress,
i.e., a never-ending stream of (arbitrary low-frequent) new update operations from other replicas.
However, if / for as long as the progress come to a halt, Agreement, and thus Eventual Delivery,
is violated\footnote{In the Matrix reference replica implementation Synapse,
this issue has been raised in the developer community~\cite{githubdelivery}.
Correct replicas will now take note of unreachable homeservers
and retry synchronization once they become available eventually~\cite{githubcatchup}.
Faulty senders still require constant progress.
}.

Therefore, Matrix does only provide Agreement and thereby \gls{SEC} under the assumption of constant progress.
One could now replace the best-effort broadcast with a gossip-based broadcast protocol that is scalable and robust,
as suggested in~\cite{glimpseofthematrix}.
While this alone is not sufficient to ensure Agreement without constant progress,
the efficient gossip-based broadcast
could be used by replicas to periodically broadcast their current set of forward extremities to all other replicas,
which then could trigger backfilling.
This addition would guarantee probabilistic Agreement, and therefore \gls{SEC} for the \gls{MEG} implementation of Matrix.

\section{Scalability: Width of the MEG over Time}
\label{sec:convergence}
In this section, we study the evolution of the width of the \gls{MEG} over time.
While we verified our results with Monte-Carlo simulations,
we decided to go for an analytical approach to deliver a precise mathematical problem definition and treatment.
In Sections \ref{sec:mx-crdt} and \ref{sec:weakening-assumptions},
we assumed that \textit{all} forward extremities known to a replica are used as parents for new vertices created by the replica.
In this case, the number of forward extremities is reduced as much as possible whenever a new vertex is created.
However, as noted in \cref{sec:mx_event_graph},
honest replicas can experience a high number of forward extremities after a partition,
and malicious replicas could deliberately create events with a high number of parents.
This is problematic from a performance perspective because checks, particularly of the Reference Monitor, are resource intensive,
especially when old parts of the \gls{MEG} are referenced,
but are needed for every parent~\cite{synapse-issue-forward-extremities-accumulate}.
Thus, for reasons of performance, the number of parents of a new vertex
is restricted to a finite value $d$ in practice.
If there are more than $d$ forward extremities,
a replica selects a random subset of parents of size $d$ for the new vertex.
In this section, we provide evidence that the width of the \gls{MEG} still converges\footnote{Please note that when we discuss convergence in this section, convergence is related to the number of forward extremities. In the previous CRDT-related section, convergence is related to propagation of states.} to the the number $k$ of participating replica times a small factor when all $k$ replica repeatedly and concurrently add a new vertex.

% In other words, the Reference Monitors do not scale with the number of parent events.
% https://github.com/matrix-org/synapse/issues/5319
% Matrix Forward Extremities Build-Up without attacker is caused by offline servers or servers with bad reachability
% receive two non-connected chunks.
% If no local user in that room talks or scrolls up, the chunks are not connected, and the extremities from the
% respective chunk multiplies with the *number of chunks*.
% The Matrix extremity build-up problem in non-malicious environments
% therefore is that it cannot work efficiently with chunks / that the end of chunks are
% always “frayed” and not closed with a single forward extremity.
% The intended solution is: https://github.com/matrix-org/synapse/issues/3785
% In addition, Matrix does not raise an alarm if server send parallel, own events,
% as that can happen in ordinary operations when one server sends many events at once, before the previous event is
% persisted to the database:
% https://github.com/matrix-org/synapse/issues/7086
% Of course one could prohibit that, but then this leads to a problem in building high-performance servers.

% One question is „which d is optimal for convergence?“,
% the other question is the target value, i.e. „how many forward extremities do I want in the stable state?“.
% Matrix takes d as target value for dummy events as well, but possibly that is not optimal.
% https://github.com/matrix-org/synapse/issues/5800
% https://github.com/matrix-org/synapse/issues/7413
% https://github.com/matrix-org/synapse/issues/5800

We model the evolution of the width of the \gls{MEG} as follows.
We assume that vertices are added in rounds.
A round consists of two steps:
First, each of the $k$ replicas concurrently adds a new forward extremity and thereby `eliminates' $d$ forward extremities which are used as parents.
Second, all replicas synchronize their new extremities and reach a consistent state.
The overall number of eliminated extremities depends on the amount of \textit{overlap} between the parent choices of different replicas.
As we are interested in scaling $k$ while keeping $d$ low,
we assume $k$ > $d$. As forward extremities cannot be eliminated effectively if a new forward extremity has only one parent, we assume $d > 1$.
The model also accepts an arbitrarily high number of forward extremities $u_0$ as starting condition.
We analyze the sequence of number of forward extremities $u_i$ by a mean value analysis.

Please note that this model maximizes uncoordinated concurrency in Step 1 and, thus,
models a worst case scenario:
More new vertices per replica in Step 1,
i.e., a higher frequency of updates by clients
or prolonged periods of network partition,
would eliminate more than $d$ overlap-free forward extremities,
but not add additional ones.
Also, if replicas would be aware of the eliminations of other replicas,
their forward extremity choices could be done more overlap-free.
% and therefore better.
% As a single replica removes concurrency from the problem,
% we are interested in $k > 1$.

%While it gets continuously more unlikely,
%there is a possibility that in each Step 1,
%all replicas select the same extremity set.
%Therefore, $\mathbb{P}(u_{i+1} = u_{i} + k - d) > 0$,
%which means that due to $k > d$,
%$u$ can reach arbitrary values in worst case.

\subsection{Stochastic Process}\label{subsec:stochastic_process}

We represent the concurrent updates in Step 1 of each round as a stochastic urn model.
The initial number of forward extremities $u$ is described by $u$ initial red balls,
while the number of newly linked parent vertices $d$ is the number of balls taken out during a drawing by a replica.
The update generator execution of the $k$ replicas lead to the conduction of $k$ independent drawings that can be modeled by sequential drawings with the use of black balls: the balls drawn by a replica are replaced by black balls and put back to the urn.
Therefore, after $k$ replicas have performed Step 1, the black balls indicate the number of selected parent vertices.
After each round, the black balls are replaced by red ones again and the next round starts with the current number of red balls.

We let the random variable $R_{d,k}(u)$ denote the total number of removed forward extremities,
while $u - R_{d,k}(u)$ denotes the number of forward extremities that `survived' for the subsequent urn experiment.
With this urn experiment,
we build a stochastic process for the behavior of the number of forward extremities.
We derive the expectation and the variance of $R_{d,k}(u)$, and we provide a recursion formula for
the distribution of $R_{d,k}(u)$.
We discuss the implications on \glspl{MEG} in \cref{subsec:implications}.

Let the random variable $U_n$ describe the number of balls in the urn after $n \in \mathbb{N}_0$ rounds.
Let $u_0$ be the initial number of balls in the urn, then $U_0 = u_0$ and $U_{n+1} = U_n + k - R_{d,k}(U_n)$.
As $(U_n)_{n \in \mathbb{N}_0}$ is a sequence of random variables, it is a stochastic process (cf.~e.g.~\cite{gallager}).
%There is a probability larger than zero that this process will grow without bounds, as every iteration can add more
%red balls than are removed.
%Therefore, the sequence $U_i$ does not converge in itself, due to $\PP(\lim_{n \to \infty} U_n \to \infty) > 0$.
We are interested in whether convergence can be \emph{expected}, and, if yes, how fast convergence is reached.
%$\mathbb{E}(U_i) = const.$ is a necessary but not sufficient condition,
%as the constant expected size might not be reached in finite time.
The process is a spatially inhomogeneous random walk,
specifically a time-homogeneous Markov chain (cf.~e.g.~\cite{mitzenmacher_upfal}) with state space $M_U = \mathbb{N}^+$:

\vskip-10pt
\begin{eqnarray*}
    \forall n \in \mathbb{N}_0 \enspace \forall u_0, \dots, u_{n+1} \in M_U: \quad\\
                \mathbb{P}(U_{n+1} = u_{n+1} | U_0 = u_0, \dots, U_{n-1} = u_{n-1}, U_n = u_n) \\
\nonumber       = \mathbb{P}(U_{n+1} | U_n = u_n)
                \Rightarrow \text{memorylessness}
\end{eqnarray*}

with transition matrix:
    $P_{i,j} = \mathbb{P}(U_n = j | U_{n - 1} = i) = \mathbb{P}(R_{d,k}(i) = k - (j - i))$
    %\emph{Note:} $P_{i,j}$ is independent of $u_0$ due to memorylessness \\
    and transition probability:
    $\forall n \in \mathbb{N}_0 \forall l, m \in M_U: \mathbb{P}(U_n = j | U_{n-1} = i) = \mathbb{P}(U_1 = j | U_0 = i) $.
    Thus, the transitions are independent of $n$ and the process is time-homogeneous.

A positive recurrent, aperiodic and irreducible Markov chain
has a stationary distribution,
i.e., a fixed point of the transition function in which the probabilities for the next state do not change with state transitions.

If we assume $u_0 \in [0, k-1]$,
then $u_1 > k$,
as no more than $u_0$ balls can be drawn,
but $k$ balls get added.
Therefore, states $[0, k-1]$ are transient, and one can remove them from the chain.
The remaining states are irreducible
and aperiodic:
As the next state increment in one round is in $[k - k \cdot d, k - d]$,
every other state can be reached in a finite number of iterations.
However, it is unclear whether the states are transient, i.e., visited only once,
or positively recurrent,
i.e., have a finite expected time until they are visited repeatedly.
This represents an open problem and is left for future work.
%The main source of complexity for this problem
%is the countably infinite size of the transition matrix.
%and that the initial state is not fixed.
%We therefore now investigate the stochastic properties of $R_{d,k}(u)$.

\subsection{Properties of Random Variable \texorpdfstring{$R_{d,k}(u)$}{R}}\label{subsec:stochastic_variable}
As stated before,
let $R_{d,k}(u)$ denote the total number of red balls that showed up in a single round of $k$ independent drawings of size $d$ from an urn of size $u$.
Initially, the urn contains only red balls ($r = u$) and no black balls ($b$ = 0).
A {\em drawing} means taking $d$ balls from the urn at random,
where $d < u$.
The drawing ends by replacing each red ball with a black ball and then returning all $d$ balls back into the urn.

We now provide the expectation (a)
and the variance (b) of $R_{d,k}(u)$,
and a recursion formula (c) for the distribution of $R_{d,k}(u)$. For the proof, see \cref{apx:proof}.

\begin{theorem}For the random variable $R_{d,k}(u)$, we have:
\begin{enumerate}
\item[a)] $
\displaystyle{
    \BE(R_{d,k}(u)) = d \cdot \frac{1-p^k}{1-p}, \quad k \ge 1,
}$\\
where
\begin{equation}\label{defp}
p = \frac{u-d}{u}
\end{equation}
is the retention probability.

\item[b)]
\begin{eqnarray*}
\hspace{-8mm}
\BV(R_{d,k}(u)) & = & \frac{vud}{1-p} \left(\frac{1-w^{k-1}}{1-w} - p^{k-1} \cdot \frac{1-(w/p)^{k-1}}{1-w/p}\right) \\
& & - \frac{vd^2}{(1-p)^2} \left(\frac{1\! -\! w^{k-1}}{1-w} - 2p^{k-1} \frac{1\! -\! (w/p)^{k-1}}{1-w/p} \right. \\
& & \hspace{2.5cm} \left. + p^{2(k-1)} \frac{1\! -\! (w/p^2)^{k-1}}{1-w/p^2}\right),
\end{eqnarray*}
where
\begin{equation}\label{defv}
v = \frac{d(u-d)}{u^2(u-1)}, \qquad w = \frac{(u-d)(u-d-1)}{u(u-1)}.
\end{equation}

\item[c)] If $k \ge 2$ then
\[
\hspace{-6mm}
    \PP(R_{d,k}(u)=j) = \sum_{\ell =0}^d \frac{{\binom{u -(j-\ell)}{\ell}}{\binom{j - \ell}{d-\ell}}}{{\binom{u}{d}}} \cdot \PP(R_{d,k-1}(u) = j-\ell).
\]
\end{enumerate}
\end{theorem}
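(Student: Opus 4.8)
The plan is to treat the number of drawings $k$ as a discrete time parameter and to exploit that, for fixed $u$ and $d$, the sequence $\bigl(R_{d,k}(u)\bigr)_{k\ge 1}$ is itself a Markov chain in $k$. Conditioned on $R_{d,k-1}(u)=m$, the $k$-th drawing is an independent uniform choice of $d$ of the $u$ balls, of which $m$ are black and $u-m$ are red, so the number $N$ of freshly blackened balls is hypergeometric with population size $u$, $u-m$ red balls and sample size $d$, and $R_{d,k}(u)=m+N$. From the standard hypergeometric mean and variance,
\begin{align*}
\BE\!\left(R_{d,k}(u)\mid R_{d,k-1}(u)\right)&=p\,R_{d,k-1}(u)+d,\\
\BV\!\left(R_{d,k}(u)\mid R_{d,k-1}(u)\right)&=v\,R_{d,k-1}(u)\bigl(u-R_{d,k-1}(u)\bigr),
\end{align*}
with $p$ as in \eqref{defp} and $v$ as in \eqref{defv}. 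Taking expectations in the first identity gives the linear recursion $\BE(R_{d,k}(u))=p\,\BE(R_{d,k-1}(u))+d$ with $\BE(R_{d,1}(u))=d$, whose solution is the geometric sum $d\,(1+p+\dots+p^{k-1})=d\,(1-p^k)/(1-p)$; this is part (a).

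For part (b) I would apply the law of total variance, $\BV(R_{d,k}(u))=\BE[\BV(R_{d,k}(u)\mid R_{d,k-1}(u))]+\BV[\BE(R_{d,k}(u)\mid R_{d,k-1}(u))]$. By the two conditional identities, the second term is $p^2\,\BV(R_{d,k-1}(u))$ and the first is $v\bigl(u\,\BE(R_{d,k-1}(u))-\BE(R_{d,k-1}(u)^2)\bigr)$; substituting $\BE(R_{d,k-1}(u)^2)=\BV(R_{d,k-1}(u))+\BE(R_{d,k-1}(u))^2$ and using the elementary identity $p^2-v=w$ (a one-line computation from \eqref{defp}--\eqref{defv}) collapses this to
\begin{align*}
\BV(R_{d,k}(u))&=w\,\BV(R_{d,k-1}(u))\\
&\quad+v\,u\,\BE(R_{d,k-1}(u))-v\,\BE(R_{d,k-1}(u))^2,
\end{align*}
with $\BV(R_{d,1}(u))=0$. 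Unrolling this recursion yields $\BV(R_{d,k}(u))=\sum_{m=1}^{k-1}w^{k-1-m}b_m$, where — after inserting the part-(a) formula for $\BE(R_{d,m}(u))$ — one has $b_m=\tfrac{vud}{1-p}(1-p^m)-\tfrac{vd^2}{(1-p)^2}(1-p^m)^2$. Expanding $(1-p^m)^2=1-2p^m+p^{2m}$ and evaluating the three geometric sums $\sum_{m=1}^{k-1}w^{k-1-m}=\tfrac{1-w^{k-1}}{1-w}$, $\sum_{m=1}^{k-1}w^{k-1-m}p^m=p^{k-1}\tfrac{1-(w/p)^{k-1}}{1-w/p}$, and $\sum_{m=1}^{k-1}w^{k-1-m}p^{2m}=p^{2(k-1)}\tfrac{1-(w/p^2)^{k-1}}{1-w/p^2}$ reproduces the stated closed form.

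Part (c) is a single-step conditioning. By the law of total probability over the value of $R_{d,k-1}(u)$, and since the $k$-th drawing is an independent uniform $d$-subset of an urn with $R_{d,k-1}(u)$ black and $u-R_{d,k-1}(u)$ red balls, the event $\{R_{d,k}(u)=j\}$ decomposes, disjointly over $\ell\in\{0,\dots,d\}$, into ``exactly $\ell$ red balls are drawn in round $k$ and $R_{d,k-1}(u)=j-\ell$''. The conditional probability of drawing exactly $\ell$ reds given $R_{d,k-1}(u)=j-\ell$ is the hypergeometric weight $\binom{u-(j-\ell)}{\ell}\binom{j-\ell}{d-\ell}/\binom{u}{d}$ (binomials read as $0$ outside their natural range), and summing over $\ell$ gives the asserted recursion.

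I expect part (b) to be the crux: one must get the conditional hypergeometric variance right, spot and verify the cancellation $p^2-v=w$ — without it the recursion coefficient would be an unwieldy rational function and the final expression would not close up — and then carefully bookkeep the three nested geometric sums, checking in particular that none of the ratios $w$, $w/p$, $w/p^2$ equals $1$ (they do not, since $1\le d<u$), so that $\sum_{j=0}^{k-2}x^j=(1-x^{k-1})/(1-x)$ is legitimate throughout. As an independent check on the algebra, the indicator representation $R_{d,k}(u)=\sum_{i=1}^u\mathbf{1}\{\text{ball }i\text{ is drawn in some round}\}$, together with $\binom{u-1}{d}/\binom{u}{d}=p$ and $\binom{u-2}{d}/\binom{u}{d}=w$, yields the compact equivalent $\BV(R_{d,k}(u))=u(1-p^k)p^k+u(u-1)(w^k-p^{2k})$, which should match the displayed formula.
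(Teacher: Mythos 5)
Your proposal is correct and takes essentially the same route as the paper's Appendix~A: conditioning on $R_{d,k-1}(u)$, using the hypergeometric conditional mean and variance, a linear recursion for the expectation in part~(a), the law of total variance plus unrolling for part~(b), and the same disjoint decomposition over $\ell$ for part~(c). One remark: your recursion coefficient $w$, obtained from the identity $p^2 - v = w$, is the correct one --- the paper's displayed recursion writes $v$ in that position (evidently a typo, since only the $w$-recursion unrolls to the stated closed form), and the geometric-sum bookkeeping and indicator-based cross-check you add are exactly the ``straightforward calculations'' the paper leaves implicit.
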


\subsection{Implications for the MEG and Conjecture}\label{subsec:implications}

The formula for the expectation of $R_{d,k}(u)$ allows for statements on the expected convergence behavior of the \gls{MEG}
in the presence of concurrent updates by different replicas.
In addition,
the formula for the variance of $R_{d,k}(u)$ shows the deviation from expected convergent behavior.
% The distribution allows to visualize the influence of parameters on the event graph behavior:
% The number of linked vertices $d$ is a freely selectable protocol parameter,
% while the number of current extremities $u$ and the number of concurrent operations $k$ describe the current load on the
% \gls{MEG}.
% Hence, we can utilize the distribution to optimize the speed of convergence,
% and to show how the speed of convergence of event graphs is influenced by the network load.
For \Cref{fig:forward-extremities-development},
we use these formulas to calculate
the expected development and deviation of forward extremities $U_n$ over the number of rounds for varying $k$ but fixed $d$.
To plot the calculations,
we put different realizations of $U_n$ against the expected value of $U_{n+1}$,
via
$\mathbb{E}(U_{n+1}) = U_{n} + k - \mathbb{E}(R_{d,k}(U_{n})$.
The dashed line is $U_{n+1} = U_n$,
so its intersection with the colored lines mark their fixed points.
In the area below the dashed line,
$\BE(U_{n+1}) < U_n$, the urn contents are expected to decrease, in accordance with the plotted standard deviation.
The change from linear to constant curves (for decreasing $U_n$, i.e.~from right to left) show the switch from likely overlap-free choices to overlapping choices,
which decrease the urn contents less.
It shows that for any plotted realization of $U_n$,
we either expect a decreasing urn value (below the dashed line),
or a transition to the fixed point.
Therefore, the plotted configurations show convergence.
In addition, the variance is very low.
We observe that the convergence of the width of the graph appears to be almost optimal,
i.e., the fixed point is near $k$.

\begin{figure}[tbp]
    \resizebox{\linewidth}{!}{
        \includegraphics[clip,trim=0mm 0mm 0mm 0mm]{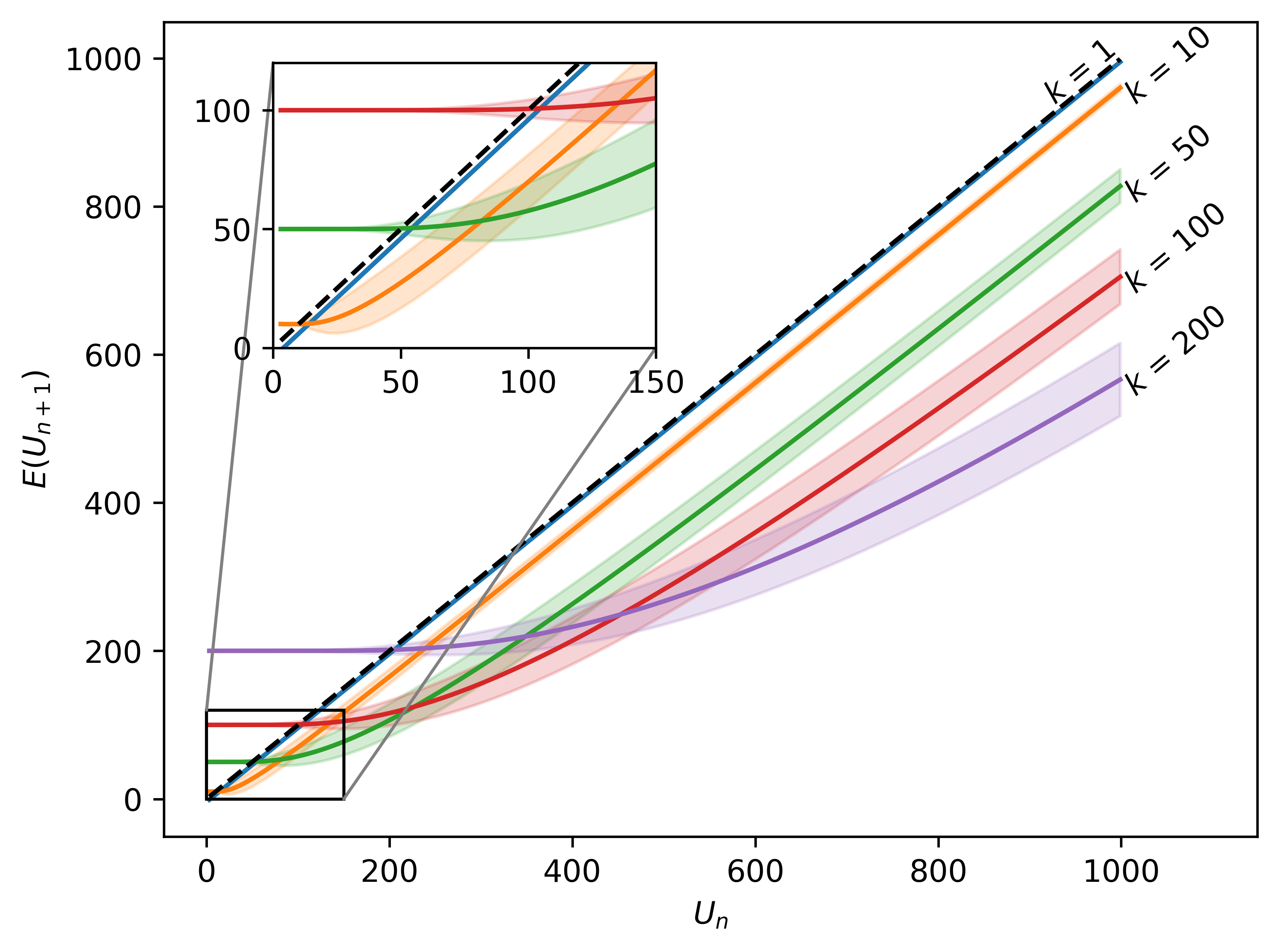}
    }
    \caption{Expectation for the next urn content $\mathbb{E}(U_{n+1})$ for different realizations of $U_n$, $d=5$, and varying $k$. Points below the dashed line of $U_n = \BE(U_{n+1})$ mean that the urn content is expected to decrease, points above mean that an increase is expected. For visibility, the plotted standard deviation is increased by the factor 5. Please note that when the curves are followed from right to left, they change from a linear slope to a constant value close to $k$.}
    \label{fig:forward-extremities-development}
\end{figure}

Synapse, the reference implementation of a Matrix replica,
recently activated a feature to force the depletion of forward extremities by sending
empty `dummy' events using the same parent selection rules as regular events\footnote{Note that Synapse actually takes 5 random forward extremities and 5 of the newest forward extremities, which are not independent between replicas.} with $d=10$,
as soon as there are more than 10 forward extremities present~\cite{synapse-dummy-events}.
This fact allows to take advantage of the convergence in periods of missing updates,
and brings reality closer to our model.

\begin{figure}[tbp]
    \resizebox{\linewidth}{!}{
        \includegraphics[clip,trim=0mm 15mm 0mm 10mm]{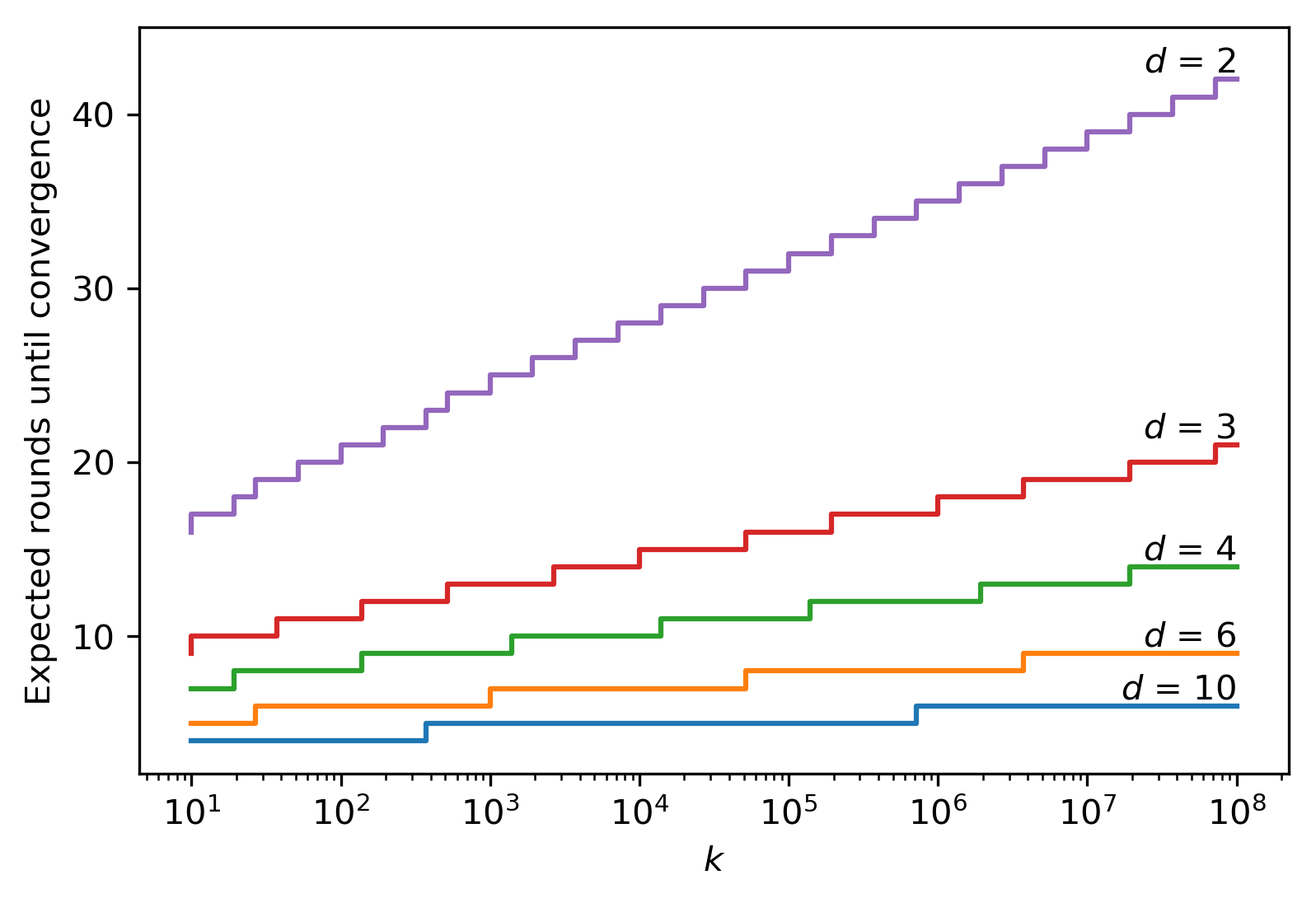}
    }
    \caption{Expected number of rounds until convergence for varying $d$ and $k$, starting at $u_0 = 100\cdot k$. While convergence speed increases with $d$, the returns in the number of rounds to reach convergence diminish.}
    \label{fig:rounds-until-convergence}
\end{figure}

To gain insights into the influence of $d$,
we use the expectation of $U_n$ via $\BE(U_{n+1}) = \BE(U_n) + k - \BE(R_{d,k}(U_n))$,
and calculate the number of rounds $n$ until $\mathbb{E}(U_n) - \mathbb{E}(U_{n+1}) < 1$.
This is equivalent to the number of rounds after which $\mathbb{E}(R_{d,k}(\mathbb{E}(U_n))) \ge k$ holds, i.e., the number of rounds after which we expect to eliminate a number of forward extremities in Step 1 that is less than or equal to the number of forward extremities that we add in Step 2.
\cref{fig:rounds-until-convergence} shows that, while the number of rounds until convergence is reached directly depends on the choice of $d$,
there are diminishing returns.
The highest gain in time until convergence is between $d=2$ and $d=3$,
while there is much less difference between $d=6$ and $d=10$.
With optimal choice of forward extremities, i.e., $u \gg k\cdot d$, convergence speed is nearly $k \cdot (1-d)$,
and therefore the number of rounds until convergence is nearly proportional to $\frac{1}{1-d}$.
Synapse employs $d=5$ with $k \lessapprox 10^3$,
which we can confirm as a good compromise in convergence speed performance using our formulas in
\cref{fig:forward-extremities-development,fig:rounds-until-convergence}.
% Maximum Gain at $d = 3$, diminishing returns afterwards,
% which probably is because it is nearest to Euler's number:
% https://iota.stackexchange.com/questions/8/why-does-iota-use-a-ternary-number-system

With small $u$, bad choices,
i.e., overlapping choices for parents are made,
but because $u$ is small,
they don't harm convergence permanently.
With large $u$, the probability for overlapping choices grows smaller and smaller, and convergence speed is linear.

We therefore conjecture that regardless of the exact choice of $k$ and $d$,
the process converges for any start value $u$ to a stationary value near $k$ in a finite number of rounds.
The derived properties of $R_{d,k}(u)$ are important building blocks to eventually prove this conjecture.
The convergence speed depends on the choice of $d$,
but values larger than 3 are subject of diminishing returns.

In practice, this means that if the conjecture holds, the \gls{MEG} possesses a self-stabilization property~\cite{self-stabilization}
in the sense that if transient faults lead to a high number of forward extremities (a high $u$),
a correct system converges to a stable number of forward extremities near $k$ in a finite number of rounds,
and remains stable as if the fault had never occured.

\section{Conclusion}
\label{sec:conclusion}
In this paper, we extracted and abstracted the replicated data type employed by Matrix,
and proved that it represents a Conflict-Free Replicated Data Type.
Therefore, the Matrix Event Graph provides Strong Eventual Consistency, a fact that in particular indicates that all correct replicas that applied the same set of updates are in equivalent state --- immediately and without any further agreement procedure.
This proof gives fundamental insights into why the Matrix system shows good resilience and scalability in the number of replicas in practice.
It therefore makes the underlying replicated data type an attractive candidate as a basis for other decentralized applications.
In addition, we analyzed the challenges for systems with byzantine actors
and showed that the properties of the Matrix Event Graph facilitate a byzantine-tolerant design,
especially due to equivocation tolerance.
However, design and analysis of an appropriate underlying broadcast protocol with the identified properties remain topics for future research.
Furthermore, we formalized and studied the evolution of the width of the graph as a spatially inhomogeneous random walk.
Our observations let us conjecture that the width of the graph always converges independently of the specific system parameters, and does so fast.

In summary, we believe that the Matrix system and similar systems are highly relevant in real-world scenarios, and
that their scientific understanding is of utmost importance.  We hope that our results advance understanding as well
as proper real-world setup of those systems, and can serve as a basis for further research.

\appendix
\section{Proof of Properties of \texorpdfstring{$R_{d,k}(u)$}{R}}
\label{apx:proof}
For a series of drawings $R_{d,k}(u)$,
we write $Z_k$ for the number of red balls that show up in the $k$th drawing,
so that $R_{d,k}(u) = Z_1 + \ldots + Z_k$. \\

a)
In what follows, let $k \ge 2$.
Under the condition $R_{d,k-1}(u) = r$, the urn contains $u-r$ red and $r$ black balls.
Thus, the conditional distribution of $Z_k$ given $R_{k-1} = r$
is the hypergeometric distribution Hyp$(d,u-r,r)$,
which implies
\[
\BE(Z_k|R_{d,k-1}(u) = r) = d \cdot \frac{u-r}{u}.
\]
Since $R_{d,k}(u) = R_{d,k-1}(u) + Z_k$,
we have\\
$\BE(R_{d,k}(u)) = \BE(R_{d,k-1}(u)) + \BE(Z_k)$.
Moreover,
\begin{eqnarray*}
\BE(Z_k) = \BE\big{[} \BE(Z_k|R_{d,k-1}(u))\big{]}
= \BE\bigg{[} d \cdot \frac{u-Z_{k-1}}{u} \bigg{]}\\
= d - \frac{d}{u} \cdot \BE(R_{d,k-1}).
\end{eqnarray*}
It follows that
\begin{eqnarray*}
\BE(R_{d,k}(u)) & = & \BE(R_{d,k-1}(u)) + d - \frac{d}{u}\cdot  \BE(R_{d,k-1}(u))\\
                & = & d + p\, \BE(R_{d,k-1}(u)).
\end{eqnarray*}
Together with $\BE(R_{d,1}(u)) = d$, we now obtain by induction over $k$
\[
\BE(R_{d,k}(u)) = d \, \sum_{j=0}^{k-1} p^j = d \cdot \frac{1-p^k}{1-p},
\]
as was to be shown.
Notice that
\[
\lim_{k \to \infty} \BE(R_{d,k}(u)) = \frac{d}{1-p} = u.
\]
This result is not surprising, sincs in the long run each of the red balls will have shown up.\\

b)  The proof uses the general fact that,
for random variables $X$ and $Y$, the variance of $X$ can be calculated
according to the formula $\BV(X) = \BE\left[ \BV(X|Y)\right] + \BV( \BE[X|Y])$,
i.e., the variance of $X$ is the sum of the expectation of the conditional variance of $X$ given $Y$
and the variance of the conditional expectation of $X$ given $Y$.
In our case, we put $X=R_{d,k}(u)$ and $Y=Z_{k-1}$, where $k \ge 2$,
and obtain
\begin{eqnarray} \nonumber
\BV(R_{d,k}(u)) & = & \BV(R_{d,k-1}(u) + Z_k)\\ \label{var1}
                & = & \BE\big{[} \BV(R_{d,k-1}(u) + Z_k|R_{d,k-1}(u))\big{]} \\
\nonumber       & & + \BV\left(\BE[R_{d,k-1}(u)+Z_k|R_{d,k-1}(u)]\right).
\end{eqnarray}
Since $\BV(R_{d,k-1}(u) + Z_k|R_{d,k-1}(u)) = \BV(Z_k|R_{d,k-1}(u))$
and the conditional distribution of $Z_k$ given $R_{d,k-1}(u)$ is the hypergeometric distribution
Hyp$(d,u-R_{d,k-1}(u),R_{d,k-1}(u))$,
it follows that
\begin{eqnarray}\nonumber
\BV(Z_k|R_{d,k-1}(u)) & = & d \cdot \frac{u-R_{d,k-1}(u)}{u}\\
\nonumber & & \cdot \left(1 - \frac{u-R_{d,k-1}(u)}{u} \right) \left(1- \frac{d-1}{u-1}\right)\\ \label{ersterterm}
    & = & \frac{d}{u^2} \left(1 - \frac{d-1}{u-1}\right)\\
\nonumber & & \cdot (u-R_{d,k-1}(u))R_{d,k-1}(u).
\end{eqnarray}
Moreover, we have
\begin{eqnarray*}
\BE[R_{d,k-1}(u)+Z_k|R_{d,k-1}(u)] & = & R_{d,k-1}(u) + \BE[Z_k|R_{d,k-1}(u)] \\
                                   & = & Z_{k-1} + d \cdot \frac{u-R_{d,k-1}(u)}{u}\\
                                   & = & d + \left(1-\frac{d}{u}\right) R_{d,k-1}(u).
\end{eqnarray*}
Therefore, the second summand figuring in (\ref{var1}) equals
\[
  \left(1-\frac{d}{u}\right)^2 \BV(R_{d,k-1}(u)).
\]
Since $\BV(R_{d,k-1}(u)) = \BE\left[R^2_{d,k-1}(u)\right] - (\BE R_{d,k-1}(u))^2$, (\ref{ersterterm}) yields
\begin{eqnarray*}
\nonumber \BE[\BV(Z_k|R_{d,k-1}(u))] & = & \frac{d}{u^2} \left(1 - \frac{d-1}{u-1}\right)\\
\nonumber & & \cdot \left(u\BE(R_{d,k-1}(u))\right.\\
\nonumber & & \left. \quad - \BV(R_{d,k-1}(u)) - (\BE R_{d,k-1}(u))^2\right).
\end{eqnarray*}
We thus obtain the recursion formula
\begin{eqnarray*}
\nonumber \BV(R_{d,k}(u)) & = & v\cdot \BV(R_{d,k-1}(u)) \\
\nonumber & & + \frac{d}{u^2} \cdot \left(1- \frac{d-1}{u-1}\right)\\
\nonumber & & \quad \cdot \left( u\BE(R_{d,k-1}(u)) - (\BE R_{d,k-1}(u))^2 \right)
\end{eqnarray*}
with $v$ given in (\ref{defv}),
from which the result follows by straightforward calculations.
Notice that $\BV(R_{d,1}(u))=0$ ($R_{d,1}(u)$ is the constant $d$),
and that $\lim_{k \to \infty} \BV(R_{d,k}(u)) = 0$.
The latter convergence is clear from the fact that,
in the long run, all red balls will have been drawn.\\

c) The result follows from the fact that the event $\{R_{d,k}=j\}$ is the union of the pairwise disjoint events
$\{R_{d,k-1}(u) =j-\ell, Z_k=\ell\}$, $\ell = 0,1,\ldots,d$,
and the fact that the conditional distribution of $R_{d,k}(u) (= R_{d,k-1}(u) + Z_k)$ given
$R_{d,k-1}(u) = j-\ell$ is the hypergeometric distribution Hyp$(d,u-(j-\ell),j-\ell)$.

\section{The hypergeometric distribution}
\label{sec:hyp_distribution}
Suppose an urn contains $b$ black and $w$ white balls.
If $m$ balls are drawn completely at random without replacement,
then the number $X$ of black balls drawn has the hypergeometric distribution Hyp$(m,b,w)$,
i.e., we have
\[
\PP(X= j) = \frac{{\binom{b}{j}}{\binom{w}{ m-j}}}{{\binom{b+w}{m}}}, \qquad j =0,1,\ldots,m,
\]
where we put ${\binom{s}{\ell}} := 0$ if $s < \ell$.
Expectation and variance of $X$ are given by
\begin{eqnarray*}
\BE(X) & = & m \cdot \frac{b}{b+w}, \\
\BV(X) & = & m  \cdot \frac{b}{b+w} \cdot \left( 1- \frac{b}{b+w}\right) \left( 1- \frac{m-1}{b+w-1}\right),
\end{eqnarray*}
respectively.

\section*{Acknowledgment}
We thank the Matrix developers for their ingenious system design,
% and their open and cooperative spirit towards the research community.
and Alexander Marsteller for many hours of differential equation analysis.

\sloppy
\balance
\printbibliography

\end{document}